\spnewtheorem{requirement}[theorem]{Requirement}{\bfseries}{\itshape}
\newcommand{\sfv}{\ensuremath{\mathbf{sfv}}}
\newcommand{\co}[1]{\hat{#1}}
\newcommand{\safe}[1]{#1^{\mathsf{S}}}
\newcommand{\key}[1]{\ensuremath{\mathop{\mathsf{#1}}\nolimits}\xspace}
\newcommand{\Case}{\key{case}}
\newcommand{\Of}{\key{of}}
\newcommand{\In}{\key{in}}
\newcommand{\Let}{\key{let}}
\newcommand{\Letstar}{\key{let*}}
\newcommand{\fold}{\key{fold}}
\newcommand{\unfold}{\key{unfold}}
\newcommand{\folds}[1]{\mathop{\safe{\mathsf{fold}}_{#1}}}
\newcommand{\unfolds}[1]{\mathop{\safe{\mathsf{unfold}}_{#1}}}
\newcommand{\ms}{\mathit{ms}}
\newcommand{\ns}{\mathit{ns}}
\newcommand{\type}[1]{\ensuremath{\mathbf{#1}}\xspace}
\newcommand{\Nat}{\type{nat}}
\newcommand{\Bit}{\type{bit}}
\newcommand{\Tree}{\type{tree}}
\newcommand{\Stream}{\type{stream}}
\newcommand{\Nats}{\type{nats}}
\newcommand{\Unit}{\ensuremath{\mathbf{unit}}\xspace}
\newcommand{\up}{\ensuremath{\mathord{\mathrm{up}}}\xspace}
\newcommand{\alt}{\mathrel{[\!]}}
\newcommand{\Ty}{\mathit{Ty}}
\newcommand{\constr}[1]{\ensuremath{\mathop{\mathsf{#1}}\nolimits}\xspace}
\newcommand{\Nought}{\constr{Nought}}
\newcommand{\One}{\constr{One}}
\newcommand{\Zero}{\constr{Zero}}
\newcommand{\Succ}{\constr{Succ}}
\newcommand{\Leaf}{\constr{Leaf}}
\newcommand{\Fork}{\constr{Fork}}
\newcommand{\declare}{\constr{declare}}
\newcommand{\data}{\constr{data}}
\newcommand{\codata}{\constr{codata}}
\newcommand{\Cons}{\constr{Cons}}
\newcommand{\id}{\mathrm{id}}
\newcommand{\asize}[1]{\ensuremath{\mathopen{|}#1\mathclose{|}}\xspace}
\newcommand{\osize}[1]{\ensuremath{\mathopen{\Vert}#1\mathclose{\Vert}}}
\newcommand{\lfp}[1]{\mu #1\,\sqdot\,}
\newcommand{\gfp}[1]{\nu #1\,\sqdot\,}
\newcommand{\Dcl}{\mathit{Dcl}}
\newcommand{\SetCat}{\ensuremath{\mathcal{S}\mkern-2.5mu\mathit{et}}\xspace}
\newcommand{\ustrut}{\rule[-.18\baselineskip]{0pt}{0pt}}
\newcommand{\inj}{\mathop{\iota}\nolimits}
\newcommand{\proj}{\mathop{\pi}\nolimits}
\newcommand{\upair}[2]{\underline{(}#1\underline{\ustrut,}#2\underline{)}}
\newcommand{\uiota}{\mathop{\underline{\iota}}\nolimits}
\newcommand{\uep}{\underline{()}}
\newcommand{\uc}{\underline{c}}
\newcommand{\RS}{\mathit{RS}}
\newcommand{\RSi}{\mathit{RS_1}}
\newcommand{\RSmi}{\mathit{RS^-_1}}
\newcommand{\Topic}[1]{\smallskip\noindent{\textbf{#1{.}}}\enspace}
\newcommand{\newfontobj}[2]{
  \newcommand{#1}[1]{
    \expandafter\def\csname##1\endcsname{{#2 ##1}}}}
\newcommand{\newthingie}[2]{
  \newcommand{#1}[1]{
    \expandafter\def\csname##1\endcsname{{#2 ##1}}}}
\newfontobj{\class}{\rm}
\newfontobj{\fnc}{\it}
\newfontobj{\category}{\bf}
\newcommand{\confined}{\mathclose{\upharpoonright}} 
\renewcommand{\gets}{\ensuremath{\mathrel{\colon=}}\xspace}
\newcommand{\dom}{\mathrm{dom}}
\newcommand{\image}{\mathrm{image}}
\newcommand{\set}[1]{\{\,#1\,\}}
\newcommand{\forallae}{\mathord{\stackrel{\kern 0.1em\infty}{\forall}}}
\newcommand{\entails}{\vdash}
\renewcommand{\phi}{\varphi}
\newcommand{\suchthat}{\mathrel{\,\stackrel{\rule{0.03em}{0.5ex}}%
{\rule[-.1ex]{0.03em}{0.5ex}}\,}}
\newcommand{\sqdot}{\rule{0.5mm}{0.5mm}}
\newcommand{\lam}[1]{\lambda #1\,\sqdot\,}
\newcommand{\fv}{\ensuremath{\mathbf{fv}}}
\newcommand{\is}{\ensuremath{\mathrel{\mathrel{:}\mathrel{:}=}}}
\newcommand{\synsep}{\;\;|\;\;}
\newcommand{\yields}{\mathbin{\downarrow}}
\newcommand{\of}{\colon}
\renewcommand{\colon}{\mathpunct{:}}
\newcommand{\Quad}[1]{\hspace*{#1em}}
\renewcommand{\today}{\number\day \space \ifcase\month\or
	January\or February\or March\or April\or May\or June\or
	July\or August\or September\or October\or November\or
	December\fi \space \number\year}
\newcommand{\irule}[2]{\ensuremath{{\frac{\strut\textstyle #1}%
  {\strut\textstyle #2}}}} 
\newcommand{\rulelabel}[1]{\hbox{\it #1:}\Quad{0.5}}
\newcommand{\sidecond}[1]{\Quad{0.5} \left(#1\right)}
\newcommand{\semantics}[1]{\ensuremath{[\![ #1 ]\!]}}
\newcommand{\level}{\textit{level}}
\def\rightmark{Danner and Royer: 
               Ramified Structural Recursion and Corecursion}
\def\ps@myheadings{%
  \let\@mkboth\@gobbletwo
  \def\@oddhead{\hfill\small\rm\rightmark\qquad\thepage}%
  \def\@oddfoot{}
  \def\@evenhead{\small\thepage\qquad\rightmark\hfill}%
  \def\@evenfoot{}
  }
\title{Ramified Structural Recursion and Corecursion \normalsize
 Extended Abstract}
\titlerunning{Ramified Structural Recursion and Corecursion}
\author{Norman Danner \inst{1}
	    \and
       James S.~Royer \inst{2}
     }
\institute{%
  Department of Mathematics and Computer Science, 
  Wesleyan University, 
  Middletown, CT 06459, USA; 
  \email{ndanner@wesleyan.edu}
 \and
  Department of Electrical Engineering and Computer Science, 
  Syracuse University, 
  Syracuse, NY 13210, USA; 
  \email{jsroyer@syr.edu}
}
\begin{document}
\bibliographystyle{splncs03}

\maketitle

\begin{abstract} 
  We investigate feasible computation over a fairly general notion
  of data and codata. 
  Specifically, we present a direct Bellantoni-Cook-style
  normal/safe typed programming formalism, $\RSi$, that expresses
  feasible structural recursions and corecursions over data and
  codata specified by polynomial functors. (Lists, streams, finite
  trees, infinite trees, etc.~are all directly definable.)  A novel
  aspect of $\RSi$ is that it embraces structure-sharing as in
  standard functional-programming implementations.  As our data
  representations use sharing, our implementation of structural
  recursions are memoized to avoid the possibly exponentially-many
  repeated subcomputations a na\"{\i}ve implementation might
  perform.  We introduce notions of size for representations of data
  (accounting for sharing) and codata (using ideas from type-2
  computational complexity) and establish that type-level~1
  $\RSi$-functions have polynomial-bounded runtimes and satisfy a
  polynomial-time completeness condition.  Also, restricting 
  $\RSi$ terms to particular types produces characterizations of
  some standard complexity classes (e.g., $\omega$-regular
  languages, linear-space functions) and some less-standard classes
  (e.g., log-space streams).
\end{abstract}
  
\section{Introduction}

What counts as polynomial-time (much less ``feasible'') computation
over general forms of data is not a settled matter.  The
complexity-theoretic literature of higher-type computability is
still thin, it is  spotty on computation over codata (infinite
lists and trees) with some notable exceptions,\footnote{%
  Hartmanis and Stearns' paper \cite{HartmanisStearns65} that
  founded computational complexity largely focuses on the
  time-complexity of infinite streams as the authors directly
  adapted Turing's original machine model \cite{Turing36}
  which, recall, concerns stream-computation.}  
and even in the case of inductively defined data there are there
remain issues that are not that well explored (see the end of \S2
below).  We develop a notion of polynomial-time computation over
data and codata using a fairly simple implicit complexity formalism,
$\RSi$, that satisfies poly-time soundness and completeness
properties.  $\RSi$ is constructed in stages.  We first introduce 
$S^-$, a formalism for computing over inductively defined data by
classical structural (aka primitive) recursion.  $S^-$ has roughly 
the computational power of G\"{o}del's primitive recursive
functionals \cite{Longley:notions:1}.  To tame this power, we impose
a form of Bellantoni and Cook normal/safe ramification on $S^-$'s
structural recursions and obtain $\RSmi$, a system that satisfies
appropriate poly-time soundness and completeness properties.  We
next introduce $S$, an extension of $S^-$
to include codata definitions and
classical structural (aka primitive) corecursions.  We extend the
safe/normal ramification to corecursions and obtain $\RSi$ that
also satisfies  poly-time soundness and completeness
properties.  The subscript on $\RSmi$ and $\RSi$ is a reminder
that these formalisms focus on type-level 1 computation, eventhough
$\RSmi$ and $\RSi$ allow higher-type terms.  It turns out that by
restricting types in $\RSi$-terms, one can characterize other
complexity classes, e.g., $\omega$-regular languages, log-space
streams of characters, linear-space streams of strings, etc.  These
seem to be related to the \emph{two-sorted} complexity class
characterizations studied by Cook and Nguyen \cite{CookNguyen:book}.

\emph{Related Work.}
The Pola project of Burrell, Cockett, and Redmond
\cite{Burrell:2009,cockett:redmond:low:cat} has  aims similar to ours,
but Pola forbids any structure-sharing of safe-data or safe-codata.
$\RSmi$ and $\RSi$, in contrast, embrace structure-sharing and
adjust the implementation of structural recursions to accomodate it.
As a result $\RSi$ and Pola describe different notions of
polynomial-time over data and codata.  
How deep these differences go is an intriguing question.
Pola also has a well-developed categorical
semantics that,  at present, $\RSi$ notably lacks.
Ramyaa and Leivant \cite{Ramyaa:2010,Ramyaa:Leivant:2011} explore
feasible first-order stream programming formalisms.
In~\cite{Ramyaa:2010}, they use infinite binary trees with
string-labels to give a partial proof-theoretic characterization of
the type-2 basic feasible functionals ($\mathrm{BFF}_2$) of
Mehlhorn~\cite{Mehlhorn76} and Cook and
Urquhart~\cite{CookUrquhart:feasConstrArith}.  In
\cite{Ramyaa:Leivant:2011}, they give a definition of logspace
stream computation and a schema of ramified co-recurrence which
parallels Leivant's ramified recurrence of \cite{Leivant:FM2}, and
characterize logspace streams as those definable using $2$-tier
co-recurrences.  F\'{e}r\'{e}e \emph{et al.}~\cite{Feree:2010} also
consider stream computation, but primarily as a technical tool in
characterizing $\mathrm{BFF}_2$ as the functions computed by a
rewrite system over streams that has a second-order polynomial
interpretation.

%
	



\Topic{Background}
\emph{Pointer Machines.} 
We assume that the underlying model of computation is along the
lines of Kolmogorov and Uspenskii's 
``pointer machines'' or Sch\"{o}nhage's \emph{storage modification 
machines}
\cite{vanEmdeBoas90}.

\emph{Types.} 
The \emph{simple types} over a  set of base types $B$ are given by: 
$\Ty^B \is B$ $|$ $\Unit$ $|$ 
$\Ty^B+\Ty^B$ $|$ $\Ty^B\times\Ty^B$ $|$ $\Ty^B\to\Ty^B$, where $\Unit$ 
(which counts as a base type) is the type of the 
empty product $()$.  
Let 
$\level($a base type$)=0$, 
$\level(\sigma+\tau)$ $=$ $\level(\sigma\times\tau)$ 
$=$ $\max(\level(\sigma),\level(\tau))$,  
$\level(\sigma\to\tau)$ $=$ $\max(1+\level(\sigma),\level(\tau))$, 
and $\Ty_i^B = \set{\sigma\in\Ty^B \suchthat \level(\sigma)\leq i}$.
We call level-0 types 
\emph{ground types}.  
A type judgment $\Gamma\entails e\of\sigma$ asserts
that $e$ can be assigned type $\sigma$ under type context $\Gamma$,
where a \emph{type context} is a finite function from variables to types.

\emph{Algebraic Notions.} 
$\SetCat$ denotes the category of sets and total functions.  
Below we are mainly concerned with total functions and lower
type-levels, so $\SetCat$ suffices as the setting for the semantics of 
our programming formalisms.
Types are thus interpreted as sets where
coproduct ($+$), product ($\times$), and exponentiation ($\to$) have
their standard $\SetCat$-interpretations.  Let $\inj_i\of
A_i\to A_1+A_2$ ($i=1,2$) be the canonical coproduct injections and
$\proj_i\of A_1\times A_2 \to A_i$ ($i=1,2$) be the canonical
product projections.
A \emph{polynomial functor} is a functor inductively built from 
identity and functors and coproducts and products,
e.g.,
$F_{0}\, X = \Unit + (\Nat \times X)$ with 
$F_0\, f = \id_{\Unit} + (\id_{\Nat}\times f)$, where 
$\Nat$ is the type of natural numbers introduced below in 
Example~\ref{ex:nattree}.\footnote{%
  Other authors (e.g., \cite{rutten:coalg}) use broader 
  notions of polynomial functor.} 
The constant-objects in our polynomial functors will always be
types.  \emph{Convention:} For $F$, a polynomial function given by
$F\,X = e$, and $\sigma$, a type, read $F \sigma$ as
the type $e[X\gets \sigma]$.
E.g., $F_0\,\Nat$ = $\Unit + \Nat \times \Nat$.

\emph{The Base Formalism.}
This paper's programming formalism are built atop $L$, a standard,
simply-typed, call-by-value lambda calculus.  The $L$-types are
$\Ty^{\emptyset}$.  Figs.~\ref{fig:core:syn}
and~\ref{fig:core:typing} give $L$'s syntax and typing rules.  We
use the standard syntactic sugar:
\begin{inparaenum}[\it (i)]
\item 
  $\Let x_1=e_1;\dots; x_m=e_m \In e_0$ 
  $\equiv$
  $(\lam{x_1,\dots,x_m}e_0)\;e_1\;\dots\;e_m$ and 
\item 
  $\Letstar x_{1}=e_{1}; \dots; \allowbreak 
  x_{m}= e_{m} \In e_{0}$
  $\equiv$
  $\Let x_1=e_1\In\allowbreak 
  (\dots \; (\Let x_m=e_m\In e_0)\dots)$.
\end{inparaenum}

\emph{Semantics.} 
The denotational semantics of $L$ is standard.  As $\Unit$ is the
sole base type of $\Ty^\emptyset$, for each
$\sigma\in\Ty_1^\emptyset$, \ $\semantics{\sigma}$ is a finite set.
$L$'s operational semantics is also fairly standard as specified by
the evaluation relation, $\yields$, described in
Fig.~\ref{fig:core:eval}.  \emph{Terminology:} An \emph{evaluation
  relation} relates closures to values.  A
\emph{closure}~$(\Gamma\entails e\of\tau)\theta$ consists of a term
$\Gamma\entails e\of\tau$ and an environment~$\theta$ for
$\Gamma\entails e\of\tau$.  (We write $e\theta$ for $(\Gamma\entails
e\of\tau)\theta$ when $e$'s typing is understood.)  An
\emph{environment}~$\theta$ for $\Gamma\entails e\of\tau$ is a
finite map from variables to values with
$\fv(e)\subseteq\dom(\theta)\subseteq\dom(\Gamma)$ and, for each
$x\in\dom(\theta)$, $\theta(x)$ is a type-$\Gamma(x)$ value.
A \emph{value}~$z\theta$ is a closure in which~$z$ (the \emph{value
  term}) is either an abstraction or else an internal representation
of $()$ or $\mathop{\inj_i} v_i$ or $(v_1,v_2)$, where $v_1$ and
$v_2$ are value terms.  By \emph{internal representation} we mean
the ``machine'' representation of value terms, the details of which
are not important for $L$, but vital for the $RS^-$ and $RS$
formalisms below.  

\section{Structural Recursions} 	

\textbf{The Classical Case.} \enspace
We extend $L$ to $S^-$, a formalism that computes,
roughly, G\"{o}del's primitive recursive functionals
\cite{Longley:notions:1} over inductively-defined data
types.  Later we  introduce $\RSmi$, a ramified, ``feasible''
version of $S^-$.  Fig.~\ref{fig:S-} gives the revised raw syntax
\eqref{e:Ssyn}, typing rules ($c_\tau$-\emph{I},
$d_\tau$-\emph{I}, $\fold_\tau$-\emph{I}) and evaluation rules
(\emph{Const}$_\tau$, \emph{Destr}$_\tau$,  \emph{Fold}$_\tau$)
for $S^-$.  A declaration, $\data \tau = \lfp{t}\sigma$, introduces
a data-type $\tau$.
The polynomial functor $F_\tau t = \sigma$ is called $\tau$'s
\emph{signature functor}.  The declaration also implicitly
introduces: 
  $\tau$'s \emph{constructor function} 
        $c_\tau\of F_\tau\tau\to\tau$,  
  $\tau$'s \emph{destructor  function}  
        $d_\tau\of \tau \to F_\tau \tau$,
        and
  $\tau$'s \emph{recursor} $\fold_\tau \of (\forall
        \sigma)[(F_\tau\sigma \to \sigma) \to \tau \to \sigma]$.
We require that the $\sigma$ in $\data \tau = \lfp{t}\sigma$ be a
ground type with constituent base types are drawn from $t$, $\Unit$,
and previously declared types.  Semantically, the data type $\tau$
is the \emph{least fixed point} of $F_\tau$: it is a
smallest set $X$ isomorphic to $F_\tau(X)$, where $c_\tau$ and
$d_\tau$ witness this isomorphism.
It is standard that polynomial functors have 
such least fixed points.
In examples we use syntactically-sugared versions of $\data \tau =
\lfp{t}\sigma$ of the form:
$\data \tau = C_1 \Of \sigma_1 \alt \dots \alt C_n \Of \sigma_n$,
where $F_\tau(\tau) = \sigma_1+\sigma_2+\dots+\sigma_n$ and, for each $i$, 
if $\sigma_i=\Unit$, 
   then $C_i$ $\equiv$ $c_\tau\circ\inj_i^n() : \tau$ and 
if $\sigma_i\not=\Unit$, 
   then $C_i$ $\equiv$
   $c_\tau\circ\inj_i^n : \sigma_i\to\tau$.\footnote{%
  For $1\leq i < n$, define: 
  $\inj_i^n = \inj_1\circ \inj_2^{(i-1)}$ and
  $\inj_n^n = \inj_2^{(n-1)}$.
  Also, define: $\inj_1^1 = \id$.   
}   
Type-$\tau$ data can then be identified with the elements of the
free algebra over the sugared constructors $C_1,\dots,C_n$ and the
other constituent data-types' constructors.

\begin{figure}[t]
\begin{minipage}{\textwidth}\small
\begin{gather}
\label{e:Ssyn}
\Dcl \;\is\; \data T =  \lfp{T'} \Ty_0^T
\Quad{3}
P \;\is\; \declare\; \Dcl \;\big(;\;\Dcl\big)^* \;\In \; E
\\[0ex]
\notag
  \rulelabel{$c_\tau$-I}
  \irule{\Gamma\entails e\of F\tau}{\Gamma\entails (c_\tau\,e)\of\tau}
  \Quad2
  \rulelabel{$d_\tau$-I}
  \irule{\Gamma\entails e\of\tau}{\Gamma\entails (d_\tau\,e) \of F\tau}
  \Quad2
  \rulelabel{$\fold_\tau$-I}
  \irule{\Gamma \entails f\of F\sigma\to\sigma
  \Quad{1} 
  \Gamma\entails e\of\tau}{\Gamma\entails \fold_\tau f \,e\of \sigma} 
  \\[0ex]
\notag
  \rulelabel{Const$_\tau$}
  \irule{e\theta \yields v\theta'}%
  {(c_{\tau}\,e)\theta \yields (\underline{c}_{\tau}\,v)\theta'} 
  \Quad3
  \rulelabel{Destr$_\tau$}
  \irule{e\theta \yields (\underline{c}_\tau\,v)\theta'}%
        {(d_{\tau}\,e)\theta \yields v\theta'} 
  \\[0.5ex]
\notag
  \rulelabel{Fold$_\tau$}
  \irule{e\theta\yields (\underline{c}_\tau\,v)\theta_1
  \Quad1 
  f(F\,(\lam{x}(\fold_\tau f\,x))\,y) 
      \theta[y\mapsto v\theta_1]\yields v'\theta'}%
  {(\fold_\tau f\,e)\theta \yields v'\theta'}
  \sidecond{\hbox{$x$ and $y$ are fresh}}
\end{gather}
\caption{Extensions for $S^-$, where $\tau$  is   
is a data-type with signature functor $F$.}
\label{fig:S-}
\end{minipage}
\end{figure}

\begin{example}\label{ex:nattree} \
  The declaration,
  $\data \Nat = \Zero \Of \Unit \alt \Succ \Of \Nat$,  
  introduces the type $\Nat$ with signature functor $F_\Nat X =
  \Unit + X$ and sugared constructors $\Zero\of\Nat$ and
  $\Succ\of\Nat \to\Nat$.  Type-$\Nat$ data thus corresponds to the
  terms of the free algebra over $\Zero$ and $\Succ$, i.e., $\Zero$,
  $\Succ(\Zero)$, $\Succ(\Succ(\Zero))$, etc.
\end{example}
\begin{example}
  The declaration,
   $\data \Tree = \Leaf \Of \Unit \alt \Fork \Of \Tree\times \Tree$, 
   introduces the type $\Tree$ with signature functor $F_\Tree X =
   \Unit + X \times X$ and sugared constructors $\Leaf\of\Tree$ and
   $\Fork\of\Tree\times\Tree\to\Tree$.  Type-$\Tree$ data thus
   corresponds to the terms of the free algebra over $\Leaf$ and
   $\Fork$, i.e., $\Leaf$, $\Fork(\Leaf,\Leaf)$,
   $\Fork(\Fork(\Leaf,\Leaf),\Leaf)$, etc.
\end{example}

The recursor for type-$\tau$ data, $\fold_\tau$, has its operational
semantics given by Fig.~\ref{fig:S-}'s 
\emph{Fold$_\tau$} rule\footnote{%
          In the rule \emph{Fold}$_\tau$, the use of $F$ should be
          read as shorthand for a $\lambda$-term that
          expresses,  in $S^-$, the polynomial function $F$ 
          (specialized to the appropriate types).}
and satisfies: 
$(\fold_\tau g)\circ c_\tau = g \circ F(\fold_\tau g)$.
This last equation expresses structural (aka \emph{primitive}) 
recursion over $\tau$.  For example, given  an 
$f\of F_{\Tree}\Nat \to\Nat$ with 
$f(\iota_1())=\Zero$ and $f(\iota_2(x,y))=\Succ(\max(x,y))$,
then $(\fold_{\Tree} f \,t)$ computes the height of $\Tree$ $t$. 
As the  $g$ in $(\fold_{\tau} g \,x)$ can be of any positive
type level, one can show that $S^-$ computes a
version of G\"{o}del's primitive recursive functionals. 
To rein in the  power of $\fold$-recursions to express just low 
complexity computations,  we apply a standard tool of implicit 
complexity, ramification. 
First, however, we need to consider 
how data is represented 
and how the size of a representation is measured.

\Topic{Representation, Size, and Memoization}
\emph{Representing Data.} 
Our internal representation of data
follows standard practice in implementations of
functional languages.  Each invocation of a constructor function:
\begin{inparaenum}[\it (i)]
  \item 
    allocates a fresh \emph{cons-cell} that stores the values of 
    the invocation's arguments and 
  \item 
    returns, as its value, a pointer to this new cons-cell.
\end{inparaenum}    
\textbf{N.B.} The product and coproduct constructors also create
cons-cells. 
%
As our formalism is purely functional, it follows that all data is
represented by directed acyclic graphs (\emph{dag}s) on cons-cells.

\emph{Measuring The Size of Data Representations.}
A data-representation's \emph{size} is simply the number of
\emph{data} cons-cells in the representation.  For example,
consider:
\begin{gather}\label{e:tn}
  \Letstar \, t_0 = \Leaf;\;
           t_1=\Fork(t_0,t_0);\,
	         \dots\,;\;
           t_n=\Fork(t_{n-1},t_{n-1}) 
           \,\In\, t_n
\end{gather}
The size of $t_n$'s representation is $n+1$ (one \Leaf-cell and $n$
\Fork-cells).  This notion of size depends on the \emph{operational}
semantics.  Denotationally, $t_n$ names a proper tree which is also
named by $t_n'$, a sized-$(2^{n+1}-1)$ $\Tree$ consisting of $2^n$
\Leaf-cells and $(2^n-1)$ \Fork-cells.\footnote{ For simplicity, we
  do \emph{not} count the cons-cells of product and coproduct
  constructors in representations as the asymptotics
  are the same whether we count these or not.}

\begin{definition}
  \label{d:ap:size}
  Suppose $e_0\theta, \dots, e_k\theta$ are ground-type
  closures.
  The \emph{apparent size} of 
  $\set{e_0\theta, \dots, e_k\theta}$
  (written: 
  $\asize{e_0, \dots, e_k}\theta$) is
  the number of data cons-cells in the representation the values of 
  $e_0\theta,\dots,e_k\theta$.
  {(\textbf{N.B.} This takes account of sharing.  
  E.g., if  
  $\theta$ is the environment 
  in force in the body of \eqref{e:tn}, then 
  $\asize{t_0,\dots,t_n}\,\theta = n+1$.)}
\end{definition}

\emph{Memoized Structural Recursions.}
Two of our goals for our feasible programming formalisms are:
\begin{inparaenum}[\it (i)] 
  \item 
    to have the {run-time} of programs to be polynomial-bounded in 
    {the size of the representations they compute over}; and
  \item      
    to have our programs to return equivalent results on 
    equivalent inputs (e.g., $t_n$ and $t_n'$ as above).  
\end{inparaenum}
These goals would seem to conflict given our conventions on
data-re\-pre\-sentions and sizes.  This is resolved via the standard
programming trick of memoization \cite{AbelsonSussman}.  Computing
$(\fold_\tau f\,x)$ can be treated as a linear programming problem
with $x$'s data representation as the underlying dag,  there is,
then, an exact match between the $\fold$-recursion's steps and $x$'s
cons-cells, moreover, the result of each step is stored for possible reuse
later in the recursion.  We assume that our structural-recursion
implementation uses memoization for just \emph{branching} data types
(e.g., $\Tree$); for nonbranching data-types (e.g., $\Nat$) it is
not needed.

\Topic{The Ramified Case}
$\RSmi$, our ramification of $S^-$, uses Bellantoni and Cook's
normal/safe distinction that splits data into two sorts:
\emph{normal} data that drive recursions and \emph{safe} data over
which recursions compute. E.g., in $(\fold_\tau g\,x)$ we want $x$
normal and $g\of$(safe data) $\to$ (safe data).  Typing constraints
enforce this distinction, which is roughly the idea behind
Bellantoni and Cook's $BC$ function algebra
\cite[\S5]{BellantoniCook} (and Leivant's formalism from
\cite{Leivant:FM2}), but \textbf{not} Bellantoni and Cook's better
known $B$ function algebra.
\emph{Normal types:}
The \emph{normal base types} consist of $\Unit$ and the types
directly introduced by $\data$-definitions.  The \emph{normal ground
  types} are the closure of the normal base types under $+$
and $\times$.  In $\data \tau = \lfp{t}\sigma$, we require that
$\sigma$ be normal.  A declaration $\data \tau = \lfp{t}\sigma$
introduces $c_\tau$ and $d_\tau$ as before, but $\fold_\tau$ is
replaced with $\folds{\tau}$ as explained shortly.
\emph{Safe types:} 
By convention, $\data \tau = \lfp{t}\sigma$ implicitly introduces a
parallel type $\safe\tau$. 
We extend -$\safe{}$ to all normal ground types by:
$\safe\Unit=\Unit$,
$\safe{(\sigma_1\times\sigma_2)}=\safe\sigma_1\times\safe\sigma_2$,
$\safe{(\sigma_1+\sigma_2)}=\safe\sigma_1+\safe\sigma_2$.  (Note:
$\Unit$ is the sole normal \emph{and} safe base type.)
$\safe\tau$ has constructor
$\safe{c_\tau}\of\safe{(F_\tau\tau)}\to\safe\tau$ and destructor
$\safe{d_\tau}\of\safe\tau\to\safe{(F_\tau\tau)}$.  In examples, we
use sugared constructors for $\safe\tau$, e.g., $\safe\Succ \of
\safe\Nat\to\safe\Nat$. 
The elements of $\safe\tau$ are essentially  ``safe'' copies of the
elements of $\tau$.  Let $\sfv(\Gamma\entails e\of\tau)
= \{\, x \in\fv(e) \suchthat \Gamma(x)$ is safe$\,\}$, which
we write as $\sfv(e)$ when the judgment is understood.

The new recursor for $\tau$-data,
$\folds{\tau}$, has the same operational semantics as $\fold_\tau$
(Fig.~\ref{fig:S-}: \emph{Fold$_\tau$}) and the same typing rule as
$\fold_\tau$ 
\emph{except} for the new 
side-condition, see
Fig.~\ref{fig:RS-}.
\emph{Examples:}
$\up_\tau = \lam{x}(\folds{\tau} \safe{s}_\tau\,x)\
\of\tau\to\safe\tau$
translates each type-$\tau$ datum to the corresponding type-$\safe\tau$
datum; 
$\mathit{plus} = 
\lam{x,y}(\Let g =\lam{z} \Case z \Of \,
(\iota_1 w) \Rightarrow y; 
(\iota_2 w)\Rightarrow (\safe\Succ w) 
\In \,(\folds{\Nat} g\, x)\of \Nat\to\safe\Nat\to\safe\Nat$ 
adds its  arguments and 
$\mathit{times} = \lam{x,y}(\Let h =\lam{z} \Case z \Of \,
(\iota_1 w) \Rightarrow (\safe\Succ \Zero); 
\allowbreak
(\iota_2 w)\Rightarrow (\mathit{plus}\,x\,w) 
\In\, (\folds{\Nat} h\, y)\of \Nat\to\Nat\to\safe\Nat$ multiplies
its  arguments. 

Ramified type systems have a perennial difficulty: certain natural
compositions can be untypable, e.g., $\mathit{cube} =
\lam{x}\mathit{times}\,x\,(\mathit{times}\,x\,x)$ fails to type
using the rules stated so-far.  As a mitigation, we introduce the
\emph{lower} typing rule (Fig.~\ref{fig:S-}) which is an adaptation
to $\lambda$-calculi of Bellantoni and Cook's Raising Rule
\cite{BellantoniCook}.
Using \emph{lower} on the $(\mathit{times}\,x\,x)$ subterm yields a
$\Nat\to\safe\Nat$ version of $\mathit{cube}$ and an second
application of \emph{lower} yields a $\Nat\to\Nat$ version.  When we
say a type-1 function is $\RSmi$-computable, we usually mean it is
computed by a type-$(\sigma\to\tau)$ $\RSmi$-term where both
$\sigma$ and $\tau$ are normal.

\begin{figure}[t]
\begin{minipage}{\textwidth}\small
\begin{gather*}
  \rulelabel{$\folds\tau$-I}
  \irule{\Gamma \entails f\of F\sigma\to\sigma
  \Quad{1} 
  \Gamma\entails e\of\tau}{\Gamma\entails \folds\tau f \,e\of \sigma} 
  \sidecond{\dagger}
  \Quad2
  \rulelabel{lower}
  \irule{\Gamma\entails e\of\safe\tau}{
  \Gamma\entails e\of\tau}
  \sidecond{\ddagger}
\end{gather*}
\caption{Key additions for $\RS^-$.  \quad
$(\dagger)$ $\tau$ is a normal and $\sigma$ is safe.
\quad 
$(\ddagger)$  $\sfv(e)=\emptyset$.}
\label{fig:RS-}
\end{minipage}
\end{figure}

$\RSmi$ is thus the modification of $S^-$ as sketched above
with one last change:
$+$-\emph{I} and  $\times$-\emph{I} now have the side-condition
that the component  types, $\sigma_1$ and $\sigma_2$, are 
both either normal ground types or safe ground types.
\emph{(Thus each ground-type $\RSmi$-term is of either of normal- 
or safe-type. This simplifies size bounds.)}

\emph{Poly-Heap Size Bounds.}  Bellantoni and Cook proved poly-max
size bounds for their formalisms, e.g., if $e$ is a base-type
(string-valued) $BC$-expression, then, for all $\theta$, \
$|e\theta| \leq (p + \max_{y\in\sfv(e)}|y|)\theta$, where $p$ is a
\emph{normal polynomial}, i.e., $p$ is is polynomial over $\{\,|x|
\suchthat x\in\fv(e)$ \& $x$ has a normal type$\,\}$.  Because of
sharing we replace poly-max with \emph{poly-heap} bounds, i.e.,
those of the form $p+\asize{y_1,\dots,y_n}$ (recall
Definition~\ref{d:ap:size}) where $p$ is a normal polynomial and
$\set{y_1,\dots,y_n}=\sfv(e)$.  (\emph{Convention:} 
We write bounds
as $\asize{e}\leq p+\asize{y_1,\dots,y_n}$, keeping the universal
quantification over $\theta$ implicit
and in place of $\asize{y_1,\dots,y_n}$ we  write 
$\asize{\sfv(e)}$.

\begin{theorem}[$\RSmi$ Poly-Heap Size-Boundness] \label{t:RS-:size} 
  Given an $\RSmi$ judgment $\Gamma\entails e\of \tau$  in which
  $\tau$ and each $\sigma\in\image(\Gamma)$ is a ground type, one can
  effectively find a normal polynomial $p$ with
$\asize{e} \leq p + \asize{\sfv(e)}$.
\end{theorem}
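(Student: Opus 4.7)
The plan is to prove the bound by induction on the derivation of $\Gamma \entails e \of \tau$. Since the theorem restricts $\Gamma$ and $\tau$ to ground types while interior subterms (notably the $f$ in a $\folds{\tau}\,f\,e$) may have arrow type, I would strengthen the induction hypothesis so that for a judgment $\Gamma \entails e \of \sigma_1 \to \cdots \to \sigma_n \to \sigma_0$ with all $\sigma_i$ ground, $\asize{e\,v_1 \cdots v_n} \leq p + \asize{\sfv(e)} + \asize{v_1,\dots,v_n}$ for any ground-type value-closures $v_1, \dots, v_n$ of the appropriate types, where $p$ is a normal polynomial in the sizes of the normal free variables of $e$ together with those $v_i$ of normal type. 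The ground-type case ($n=0$) is then the statement of the theorem.

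Most cases are routine. A variable $x$ of normal type has $\asize{x} = |x|$, a monomial; a safe variable has $x \in \sfv(x)$ so the bound collapses to $\asize{\sfv(x)}$. Unit, pairs, injections, and $\Case$ introduce no counted cons-cells (only data constructors are counted, by explicit convention), so their size bounds follow by adding (for pairs and injections) or taking maxima (for $\Case$) of the IH bounds for their subterms, carrying over the scrutinee's safe free variables. $\Let$-bindings and $\beta$-reduction follow by instantiating the strengthened IH on the body with the bound variable promoted into $\Gamma$. The constructor case $c_\tau\,e$ allocates exactly one new data cons-cell, so $\asize{c_\tau\,e} \leq 1 + \asize{e}$; the destructor case $d_\tau\,e$ extracts a component of $e$'s value, so $\asize{d_\tau\,e} \leq \asize{e}$. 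The \emph{lower} rule leaves terms and environments untouched and carries the side condition $\sfv(e) = \emptyset$, so the IH already supplies a pure normal polynomial bound that transfers directly to the reclassified judgment.

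The decisive case is $\folds{\tau}\,f\,e \of \sigma$, where $\tau$ is normal, $\sigma$ is safe, and $\sfv(\folds{\tau}\,f\,e) = \sfv(f)$ (since $e$ has normal type, so $\sfv(e) = \emptyset$). The IH on $e$ yields a normal polynomial $q$ with $\asize{e} \leq q$. The strengthened IH on $f$ yields a normal polynomial $p_f$ such that each single $f$-invocation, applied to an argument built from product/coproduct pairings over already-allocated pointers, allocates at most $p_f$ fresh safe data cons-cells on top of what is already reachable through $\sfv(f)$ and through its input. The memoized implementation performs exactly one $f$-invocation per data cons-cell of $e$'s representation, so at most $\asize{e} \leq q$ invocations in total, and previously-computed subresults are shared rather than re-allocated. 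Summing gives $\asize{\folds{\tau}\,f\,e} \leq q \cdot p_f + \asize{\sfv(f)}$, and $q \cdot p_f$ is again a normal polynomial, yielding a bound of the required form.

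The main obstacle is precisely this $\folds{\tau}$-case: one has to verify that the memoized dag of intermediate results is counted correctly by the poly-heap measure, so that the total cell count grows additively across iterations rather than multiplicatively in the intermediate-value sizes. This is where the Bellantoni--Cook ramification pays off---because $f$ may not perform further $\folds{}$-recursions over its safe argument, its body contributes only a fixed normal polynomial's worth of fresh cells per call, independently of how large the accumulating safe dag has become. The convention of counting only data cons-cells (ignoring $+$ and $\times$ cells) is what keeps this per-call contribution a polynomial in normal sizes alone; otherwise the $F\safe{\tau}$-shaped intermediate structure re-built at each call would itself have to be charged, and the bound would fail.
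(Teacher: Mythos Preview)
Your overall strategy matches the paper's: structural induction on the derivation with $\folds{}$ as the crucial case. Two differences are worth flagging.

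First, a methodological one. Where you strengthen the induction hypothesis to cover fully-applied arrow-type terms, the paper instead \emph{normalizes} $e$ up front, so that the only remaining higher-type subterms are the $\lambda$-abstractions appearing as the first argument of some $\folds{}$. This trades your more general IH for a simpler one plus a preprocessing step; either route is viable.

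Second, a genuine gap in the fold case. You assert that each $f$-invocation ``allocates at most $p_f$ fresh safe data cons-cells on top of what is already reachable through $\sfv(f)$ and through its input,'' and then sum over the $q$ memoized calls. But your strengthened IH, as stated, only bounds $\asize{f\,v}$---the \emph{total} cells reachable from the single result---by $p_f + \asize{\sfv(f)} + \asize{v}$; it does not bound the cells \emph{fresh over} the already-present heap. If you try to telescope na\"{\i}vely via $\asize{r_i,r_{i+1},\dots,r_n,\sfv(f)} \leq \asize{r_i} + \asize{r_{i+1},\dots,r_n,\sfv(f)}$, the structure that $r_i$ shares with the earlier $r_j,r_k$ (its recursive inputs) is double-counted and you pick up a factor of~$2$ at each step, which is fatal. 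The paper closes exactly this gap with a \emph{joint-heap} substitution lemma (its Claim~1 in the appendix): for safe $x$,
\[
\asize{\{e[x\gets e'],e_1,\dots,e_k\}}\theta \;\leq\; p + \asize{\{e',e_1,\dots,e_k\}\cup\sfv(e[x\gets e'])}\theta.
\]
From this one obtains the true telescoping step $\asize{\{r_i,\dots,r_n\}\cup\sfv(e)} \leq p_0 + \asize{\{r_{i+1},\dots,r_n\}\cup\sfv(e)}$ (Claim~2(c)) and hence the product bound $p_0\cdot p_1$. To make your route go through, you would need to build this joint-heap form into the IH from the outset: bound $\asize{e\,v_1\cdots v_n, W}$ by $p + \asize{\sfv(e), v_1,\dots,v_n, W}$ for an arbitrary ambient set $W$ of already-allocated values.
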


A partial proof of Theorem~\ref{t:RS-:size} is given in the Technical 
Appendex.
$\RSmi$ also satisfies \emph{poly-cost boundness} (the computation
tree of $\Gamma\entails e\of \tau$ has a poly-size bound over
$\set{|x|\suchthat x\in\dom(\Gamma)}$) and 
\emph{poly-completeness} (for a suitable model of
computation and cost, $\RSmi$ can compute all poly-time computable
type-1 functions).  For want of space we omit these results, but
their proofs are similar to analogous
results in \cite{DR:ATS:LMCS,DannerRoyer:2algs}.
\textbf{N.B.}  While the 
completeness result's proof is standard, the result itself is a
little subtitle.  Typically, complexity classes concern the purely
representational level and not extensionality constraints imposed by
the things represented.  In contrast, the $\RSmi$-computable
$(\Tree\to\Tree)$-functions form a nonstandard complexity class: all
the poly-time (in the dag-size) computable functions over
$\Tree$-representations which functions are  extensional with
respect to $\Tree$-data.  
\emph{Type Restricted $\RSmi$.}  
Let $\RSmi[\Nat]$ be the restriction of $\RSmi$ to terms with types
in $\Ty^{\set{\Nat,\safe\Nat}}\!$.
It follows from \cite{bellantoni:thesis,Leivant:FM2} 
that the $\RSmi[\Nat]$-computable
$(\Nat\times\cdots\times\Nat\to\Nat)$-functions = $\mathcal{E}_2$,
the second Grzegorczyk class (aka, the linear-space computable
functions).  $\mathcal{E}_2$ plays a key r\^{o}le in ``two-sorted
complexity'' characterizations \cite[Chapter 4]{CookNguyen:book}.  We shall
make similar use of it below.

\section{Structural Corecursions}\label{S:codata}

\textbf{The Classical Case.} \enspace We extend $S^-$ to $S$, a
formalism that computes, roughly, G\"{o}del's primitive recursive
functionals over inductively- and coinductively-defined data.
$\RSi$ will be our ramified, ``feasible'' version of $S$.
Fig.~\ref{fig:S} gives the revised syntax \eqref{e:S} and evaluation
rules (\emph{Destr}$'_\tau$, \emph{Unfold$_\tau$}).  The typing
rules for $\co{c}_\tau$, $\co{d}_\tau$, and $\unfold_\tau$ are given
implicitly below.  A declaration, $\codata\tau=\gfp{t}\sigma$,
introduces a codata-type $\tau$.  The polynomial functor $F_\tau t =
\sigma$ is called $\tau$'s \emph{signature functor}.  The
declaration also implicitly introduces:
  	    $\tau$'s \emph{constructor function} 
        $\co{c}_\tau\of F_\tau\tau\to\tau$,  
        $\tau$'s \emph{destructor  function}  
        $\co{d}_\tau\of \tau \to F_\tau \tau$,
        and
        $\tau$'s \emph{corecursor} $\unfold_\tau \of (\forall
        \sigma)[(\sigma \to F\sigma) \to \sigma \to \tau]$.
The $\sigma$ in 
$\codata\tau= \gfp{t}\sigma$ must be a ground type with constituent
base types drawn from $t$, $\Unit$, and previously declared types.
Type-$\tau$'s corecursor, $\unfold_\tau$, has its operational
semantics given by Fig.~\ref{fig:S}'s \emph{Unfold}$_\tau$-rule and
satisfies:
$\co{d}_\tau\circ (\unfold_\tau\,f) = F_\tau (\unfold_\tau\,f)\circ f$.
\textbf{N.B.} Codata constructors and unfolds are lazy:
$\co{c}_\tau$- and $\unfold_\tau$-expression are values and hence 
are not evaluated unless forced by a $\co{d}_\tau$-application 
per \emph{Destr}$'_\tau$ and \emph{Unfold$_\tau$}.
Semantically, a codata type $\tau$ is the \emph{greatest fixed point} 
of $F_\tau$: it is a largest set $X$ isomorphic 
to $F_\tau(X)$, where $\widehat{d}_\tau$ and $\widehat{c}_\tau$
witness the isomorphism.  
Polynomial $\SetCat$-functors are know to have 
such greatest fixed points \cite[Theorem~10.1]{rutten:coalg}. 
In examples, we use sugared 
$\codata$-declarations along the lines of the sugared
$\data$-declarations.   

\begin{figure}[t]
\begin{minipage}{\textwidth}\small
\begin{gather}
\label{e:S}
Dcl \;\;\is\;\; \data\; T = \lfp{X}\Ty_0
  \synsep \codata\; T = \gfp{X}\Ty_0
\\[0ex]
\notag
  \rulelabel{Destr$'_\tau$}
  \irule{e\theta \yields v'\theta'}{(\co{d}_{\tau}\,(\co{c}_{\tau}\,e))\theta \yields 
  v'\theta'} 
  \Quad3
  \rulelabel{Unfold$_\tau$}
  \irule{(F(\unfold_\tau f))(f \, e)\theta\yields v'\theta'}%
  {\co{d}_{\tau}(\unfold_\sigma f\,e)\theta \yields v'\theta'}
\end{gather}
\caption{Key Additions for $S$.}
\label{fig:S}
\end{minipage}
\end{figure}

\begin{example}\label{ex:nats}
  The declaration, $\codata \Nats$ $=$ $\Cons\Of\Nat\times\Nats$,
  introduces the type $\Nats$ with signature functor 
  $F_{\Nats} X = \Nat \times X$ 
  and  constructor $\Cons\of \Nat\times\Nats\to\Nats$.
  Each element of $\Nats$ corresponds to an infinite 
  sequence of $\Nat$'s.
  Given an $f\of\Nat\to\Nat$, let 
  $\ms = \unfold_\Nats\,(\lam{x}\Case x \Of 
  (\iota_1 y) \Rightarrow (f\,\Zero,\allowbreak \Succ\Zero)$;
  $(\iota_2 y) \Rightarrow (f\,(\Succ y),\Succ (\Succ y))$,
  so $\ms \equiv$ the sequence $f(0)$, $f(1)$, $f(2), \cdots\;$.
  Given an $\ns\of\Nats$, let 
  $g=\lam{n}\pi_1(\fold_\Nat (\lam{x}\Case x \Of
   (\iota_1 y) \Rightarrow \ns;$ 
  $(\iota_2 y) \Rightarrow (\co{d}_\Nats\,y)) \; n)$, 
   so $g(n) = $ the $n$th $\Nat$ in  $\ns$'s sequence. 
\end{example}

As the above shows, codata are really higher-type objects.  To help
analyze this, define
a rank-0 type is at type with no constituent codata types, and
a rank-$(k+1)$ type is a type with constituent codata types of maximum rank $k$.
E.g., $\Nat$, $\Tree$, and $\Nats$ are rank 0 and a stream of
$\Nats$ is rank 1.  Let $S_k$ be the restriction of $S$ to types of
levels $\leq 1$ and 
ranks $\leq k$.
Not surprisingly, the $S_k$-functions of types
$\Nat\times\dots\times\Nat\to\Nat$ correspond to P\'eter's
$(k+1)$-primitive recursive functions \cite{Longley:notions:1}.
We shall show how normal/safe ramification can rein in the power of
these corecursions.  First, we consider codata representations and
their size.

%



\Topic{Representation and Size}
A type-$\tau$ codatum $x$ is represented via lazy $\co{c}_\tau$-
and/or $\unfold_\tau$-expressions; if we probe $x$ with ever-longer
series of destructor applications, a possibly infinite structure
unfurls.  A codatum is thus a function-like object that must be
queried (via destructor applications) to be computed over.  To
measure codata-size we adapt Kapron and Cook's notion of the length
of a type-1 function \cite{KapronCook:mach}.  Measuring just rank-0
codata suffices for this paper.

\begin{definition}  \label{d:ob:size} \
Suppose $e\theta$ is of type $\tau$, a rank-0 codata-type.
  
  \begin{asparaenum}[(a)]
  \item 
    The \emph{apparent size} of $e\theta$ (written:
    $\asize{e}\theta$) is 1.
  
  \item 
    The \emph{observed size} of $e\theta$ (written:
    $\osize{e}\theta$) is the function over natural numbers:
    $n\mapsto\max(\{\, \asize{\vec{d}(e)}\theta \suchthat \vec{d}$
    varies over sequences of compositions of destructors with (i)
    $\vec{d}(e)$ type correct and (ii) at most $n$ occurrences of
    $\co{d}_\tau\,\})$.

  
  \end{asparaenum}
\end{definition}

Roughly, $(\osize{e}\theta)(n)$ is the maximum apparent-size of the
data in $\tau$-cons-cells along any path from the head of $e\theta$
that includes at most $n$ type-$\tau$ links.  \emph{Example:} For
$\ns$ of Example~\ref{ex:nats},
$(\osize{\ns}\theta)(n)=1+\max_{i<n}$(the $i$th element of $\ns$'s
sequence).

\Topic{The Ramified Case}
$\RSi$, our ramification of $S$, extends the normal/safe distinction
to codata.  \emph{Key Points:} As the value of
$(\unfold_\tau g)$ is the result of a (co)recursion, it should be
safe,  as $g$ gives the computation step, we should have
$g\of\mathrm{safe}\to\mathrm{safe}$, and as $\unfold$'s are lazy, 
destructs drive the computation.
 \emph{Normal and Safe
  Types:} First, we bring in all the
$\RSmi$ conventions to this setting to ramify data.
Second, a declaration
$\codata\tau=\gfp{t}\sigma$ introduces the normal type $\tau$ with
constructor $\co{c}_\tau$ and destructor $\co{d}_\tau$ as before, a
safe type $\safe\tau$ with constructor $\safe{\hat{c}}_{\tau}\of
\safe{(F_\tau \tau)}\to\safe\tau$ and destructor
$\safe{\co{d}}_{\tau}\of \safe\tau\to\safe{(F_\tau \tau)}$, and
$\unfolds{\tau}\of(\forall \hbox{ normal }\sigma)[ (\safe\sigma\to
\safe{(F_\tau\sigma)})\to\safe\sigma \to \safe\tau]$ where $\unfolds{\tau}$ has
the same operational semantics as $\unfold_\tau$.  \emph{Example:}
Replace $\unfold_\Nats$, $\Zero$,  $\Succ$, and $f\of\Nat\to\Nat$
with $\unfolds{\Nats}$, $\safe\Zero$,  $\safe\Succ$, and
$f\of\safe\Nat\to\safe\Nat$ in
Example~\ref{ex:nats}'s definition of $\ms$, then $\ms$ can be
assigned assigned type $\safe\Nats$.  
\textbf{N.B.}
Given an $\RSmi$-computable $f\of\Nat\to\Nat$, there may 
\emph{not} be an $\RSi$-definable analogue of $\ms$ from
Example~\ref{ex:nats}.

\emph{Poly-Heap Size Bounds.} 
To adapt poly-heap bounds to take account of observed sizes we use
Kapron and Cook's notion of second-order polynomials
\cite{KapronCook:mach}; these are roughly ordinary polynomials with
applied type-1 function symbols included (e.g.,
$x^2+f(y+2)$).  Now $\asize{e} \leq p+\asize{\sfv(e)}$ 
is a \emph{poly-heap bound on apparent size} when $p$ is a 
\emph{normal second-order polynomial} 
(i.e., over $\{\,\asize{x} \suchthat \Gamma(x)$ is normal$\,\}$ 
and $\{\,\osize{x} \suchthat \Gamma(x)$ is a normal codata type$\,\}$)
and 
$\osize{e} \leq \lam{n}(p+\asize{\sfv(e)})$ 
is a \emph{poly-heap bound on observed size} where now $p$
can have $n$
as a type-0 variable.

\begin{theorem}[$\RSi$ Poly-Heap Size-Boundness] \label{t:RS:size}
  For an $\RSi$-judgment $\Gamma\entails e\of \tau$ where
  $\tau$ and each $\sigma\in\image(\Gamma)$ is a ground type, one
  can effectively find a normal second-order polynomial $p$ such
  that, if $\tau$ is a data-type, then $\asize{e} \leq p +
  \asize{\sfv(e)}$ and, if $\tau$ is a codata-type, then
  $\osize{e} \leq \lam{n}(p + \asize{\sfv(e)})$.
\end{theorem}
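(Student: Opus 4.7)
The plan is to prove Theorem~\ref{t:RS:size} by induction on the derivation of $\Gamma \entails e\of \tau$, extending the argument sketched for Theorem~\ref{t:RS-:size}. Every $\RSmi$ rule carries over verbatim into the data-type branch of the conclusion, since a normal polynomial is trivially a normal second-order polynomial, so the genuinely new work lies in the codata-specific cases: the (safe and normal) constructors $\co{c}_\tau$ and destructors $\co{d}_\tau$, the corecursor $\unfolds{\tau}$, and the $+$- and $\times$-introductions at codata type.

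Laziness disposes of two cases immediately. If $e$ is a $\co{c}_\tau$- or $\unfolds{\tau}$-expression at codata type, then $e\theta$ is already a value (no rule of $\yields$ fires), so $\asize{e}\theta = 1$ by Definition~\ref{d:ob:size}(a), and only the observed-size clause remains to establish. Conversely, for $\co{d}_\tau$-I at $F_\tau\tau$-type, if $e\of \tau$ is codata with IH-bound $\osize{e} \leq \lam{n}(p + \asize{\sfv(e)})$, then a single $\co{d}_\tau$-probe suffices to reveal $\co{d}_\tau e$, so its apparent size is bounded by the IH-bound instantiated at $n=1$, which is itself a normal second-order polynomial in $\asize{\sfv(\co{d}_\tau e)}$ and the observed sizes of the normal codata variables in $\Gamma$.

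The substantive case is the observed-size bound for $\unfolds{\tau} f\,e_0 \of \safe\tau$, with $f\of\safe\sigma \to \safe{(F_\tau\sigma)}$, $e_0\of\safe\sigma$, and $\sigma$ normal. A sequence of at most $n$ $\co{d}_\tau$-probes along any path unfolds at most $n$ successive $f$-applications, producing seeds $e_0, e_1, \dots, e_n$ in which each $e_{k+1}$ is a $\safe\sigma$-subcomponent of $f(e_k)$. Because $f$ and all the seeds carry safe type, the IH (strengthened in the customary fashion to yield applicative bounds on functional subterms) gives $\asize{f(x)} \leq q + \asize{\sfv(f)} + \asize{x}$ for some normal second-order polynomial $q$. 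A straightforward induction on $k$ then yields $\asize{e_k} \leq k(q + \asize{\sfv(f)}) + \asize{e_0}$, and maximizing the apparent size exposed after $\le n$ probes delivers a bound of the required form $\lam{n}(p + \asize{\sfv(\unfolds{\tau} f\,e_0)})$.

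The main obstacle is precisely this last iteration, and it is where ramification earns its keep: without the safe-to-safe typing of $f$, the IH would only yield $\asize{f(x)} \leq P(\asize{x})$ for some second-order polynomial $P$, and iterating such a multiplicative bound $n$ times compounds to a superpolynomial recurrence. The safe discipline forces the dependence on $\asize{x}$ to be additive, closing the induction in a polynomial. With this case in hand, the remaining rules (application, abstraction, product/coproduct introduction, and \emph{lower}) all propagate second-order polynomial bounds by the same manipulations used for Theorem~\ref{t:RS-:size}.
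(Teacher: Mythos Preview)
Your overall strategy---structural induction on the typing derivation, with the substantive new case being $\unfolds{\tau}$---is the paper's, and your reading of the constructor/destructor cases is fine. But there is a genuine gap in your iteration step for $\unfolds{\tau}$.

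You state the IH as $\asize{f(x)} \leq q + \asize{\sfv(f)} + \asize{x}$ and iterate to $\asize{e_k} \leq k\,(q + \asize{\sfv(f)}) + \asize{e_0}$. That bound is \emph{not} of poly-heap form: the term $k\cdot\asize{\sfv(f)}$ multiplies the heap-size of \emph{safe} free variables by $k\leq n$, so after maximizing over $k\leq n$ you get $n\,q + n\cdot\asize{\sfv(f)} + \asize{e_0}$, which cannot be rewritten as $p + \asize{\sfv(e)}$ with $p$ a \emph{normal} second-order polynomial (normal polynomials may not mention safe variables at all, and the single $\asize{\sfv(e)}$ summand is not allowed a coefficient depending on $n$).

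The mis-step is writing the IH with a \emph{sum} $\asize{\sfv(f)} + \asize{x}$. Poly-heap size (Definition~\ref{d:ap:size}) accounts for sharing, so what the IH actually delivers is $\asize{f(x)} \leq q + \asize{\sfv(f),x}$, i.e.\ the size of the \emph{union} of cons-cells. The whole purpose of the paper's Lemma~\ref{l:ph}(b) and the Claim~1 used inside the $\folds{\Tree}$ case is to carry such union-bounds through an iteration without duplicating the shared-structure term. Applied here, the correct recurrence is
\[
\asize{\{e_k\}\cup\sfv(f)} \;\leq\; q + \asize{\{e_{k-1}\}\cup\sfv(f)},
\]
which telescopes to $\asize{\{e_k\}\cup\sfv(f)} \leq k\,q + \asize{\{e_0\}\cup\sfv(f)}$; now the safe-variable heap appears once, and combining with the IH bound on the seed $e_0$ gives a bound of the required shape. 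This sharing-aware bookkeeping is precisely the ``poly-heap vs.\ poly-max'' point the paper stresses, and it is just as essential in the $\unfolds{}$ case as in the $\folds{}$ case.

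A secondary remark: the paper does not ``strengthen the IH to applicative bounds on functional subterms''; instead it first \emph{normalizes} $e$ so that the only $\lambda$-abstractions left are the immediate first arguments of $\folds{}$/$\unfolds{}$, and then runs the induction over ground-typed subterms only. Either route is workable, but you should be explicit about which one you are taking.
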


$\RSi$ satisfies appropriate \emph{poly-cost boundness}
and \emph{poly-completeness} properties with proofs similar to the
analogous (type-2) results in \cite{DR:ATS:LMCS,DannerRoyer:2algs};
but, as with $\RSmi$, we have not the space to describe, much less
prove, these results.

\emph{Type Restricted  $\RSi$.} 
Let $\RS_1(\sigma\to\tau\confined B) =$ the functions of type
$\sigma\to\tau$ computable by $\RS_1$-terms with types from
$\Ty^{B'}$ where $B'=$ the normal and safe versions of the base 
types occurring in $\sigma$, $\tau$, and $B$.
For
$\data \Bit = \Nought \alt \One$ and 
$\codata \Stream = \Cons\Of\Bit\times\Stream$,
one can show:
\begin{inparaenum}[\it (i)]
  \item 
	$\RSi(\,\Unit\to\Stream\confined\emptyset)$ =
  	$\omega$-regular languages,
  \item 
	$\RSi(\,\Stream\to\Stream\confined\emptyset)$ =
  	finite-state stream maps,
  \item \label{i:logstr}
    $\RSi(\,\Unit\to\Stream\confined\,\set{\Nat})$ =
    logspace streams, and
  \item \label{i:logtrans}
    $\RSi(\,\Stream\to\Stream\confined\,\set{\Nat})$ =
    logspace stream-functions.
\end{inparaenum}
(In \eqref{i:logstr} and \eqref{i:logtrans}, $\Nat$ plays the
r\^{o}le of counter/pointer type as it does in the two-sorted
characterizations considered in \cite{CookNguyen:book}.)


\section{Conclusions}\label{S:fini}

$\RSi$ characterizes a notion of poly-time computation over data 
and codata. As a formalism, $\RSi$ is not much more complicated  
than  the original ones of Bellantoni and Cook
\cite{BellantoniCook} and Leivant \cite{Leivant:FM2}, although
a few of  $\RSi$'s additions involve  subtleties.  
The above work suggests many paths for exploration.
Here, briefly, are a few.

\emph{Pola vs.~$\RSi$.}
Pola restricts sharing for its notion of poly-time
over data and codata. $\RSi$ essentially \emph{forces} sharing 
to obtain its notion of poly-time over data and codata. 
How different are these two notions?  Can one notion
``simulate'' the other in some reasonable sense?  Is there
a good notion of poly-time over data and codata that sits above
both the Pola and $\RSi$ notions?

\emph{Higher-types.} 
Higher-type functions over data-realm and higher-rank streams and
trees in the codata-realm are roughly two different perspectives 
on the same thing.  In investigating true higher-type
extensions of $\RSi$, having these two views may help puzzling out
sensible approaches to higher-type feasibility.

\emph{Programming in $\RSi$ is clumsy.}  
One problem is that  $\RSi$-recursions carry out their computations 
using $\mathrm{safe}\to\mathrm{safe}$ functions, but there are very few
of these that have \emph{closed} definitions in $\RSi$. E.g.,
there is no closed $\RSi$-function that gives the
$\safe\Nat$-maximum of two $\safe\Nat$-values, even though adding
such a function would be a complexity-theoretic conservative 
extension.
Based on an insight first pointed out and studied by
Hofmann \cite{Hofmann03}, any polynomial-time computable
$f\of\mathrm{safe}\to\mathrm{safe}$ with $\asize{f(x)}\leq
\asize{x}$ for all $x$, would be a similarly conservative extension to
$\RSi$.  Finding a simple scheme to add to $\RSi$ that allows the definition
of more such functions over data (and the dual notion,
$\osize{x}\leq \osize{f(x)}$, for functions over codata) is a nice
problem.

%
%
%

\bibliography{ops}

\newpage
\appendix
\section*{Technical Appendix}

\setcounter{figure}{0}
\renewcommand{\thefigure}{A.\arabic{figure}}
\setcounter{theorem}{0}
\renewcommand{\thetheorem}{A.\arabic{theorem}}
\setcounter{lemma}{0}
\renewcommand{\thelemma}{A.\arabic{lemma}}

\subsubsection{Notes}

\begin{figure}[t]
\begin{minipage}{\textwidth}\small
\begin{align*}
  E \;\;&\is\;\; 
        X
        \synsep (E_1 \; E_2)
        \synsep (\lam{X}E)
	    \synsep ()
        \synsep (E_1,E_2)
        \synsep (\proj_1 E)
        \synsep (\proj_2 E)
		\\
        &
        \Quad1
        \synsep (\inj_1 E)
        \synsep (\inj_2 E)
        \synsep \Case E_0  \Of  
        (\inj_1 X_1) \Rightarrow E_1 ;\;
        (\inj_2 X_2) \Rightarrow E_2
\end{align*}
\caption{$L$ raw syntax, where $X\is$ identifiers.}\label{fig:core:syn}
\end{minipage}
\begin{minipage}{\textwidth}\small
\begin{gather*}
    \rulelabel{Id-I}
    \irule{ 
                }{\Gamma,\,x\of\tau\entails x\of\tau} 
	\Quad{2}
    \rulelabel{$\to$-I}
    \irule{ 
                \Gamma,\,x\of\zeta\entails e\of\tau}{
        \Gamma\entails (\lam{x} e) \of\zeta\to\tau} 
        \Quad{2}
    \rulelabel{$\to$-E}\;
    \irule{ 
                \Gamma\entails e_0\of\zeta\to\tau 
                \Quad{1.5}
        			\Gamma\entails e_1\of\zeta}{
                \Gamma\entails (e_0\;e_1)\of \tau}
	\\[0ex] 
   \rulelabel{$\Unit$-I}
   \irule{}{\Gamma\entails () \of \Unit}
	\Quad{2}
    \rulelabel{$\times$-I}
    \irule{ 
	            \Gamma\entails e_1\of\sigma_1
	            \Quad{1.5}
	            \Gamma\entails e_2\of\sigma_2}{
                \Gamma\entails (e_1,e_2)
                \of\sigma_1\times\sigma_2} 
	\Quad{2}
    \rulelabel{$\times$-E$_i$}
    \irule{ 
	            \Gamma\entails e\of\sigma_{1}\times\sigma_2}{
                \Gamma\entails (\proj_i e)\of\sigma_i} 
					\sidecond{\dagger}
	\\[0ex] 
    \rulelabel{$+$-I$_i$}
    \irule{ 
	            \Gamma\entails e\of\sigma_{i}}{
                \Gamma\entails (\inj_i e)\of\sigma_1+\sigma_2} 
					\sidecond{\dagger}
    \Quad{2}
    \rulelabel{$+$-E}
    \irule{
	\Gamma\entails e_0\of \sigma_1+\sigma_2
    \Quad{1.5}             
    \set{\Gamma,x_i\of\sigma_i\entails e_i\of \tau}_{i=1,2}}%
    {\Gamma\entails (\Case e_0 \Of 
        (\inj_1 x_1) \Rightarrow e_1 ;\;
        (\inj_2 x_2) \Rightarrow e_2)\of\tau}
\end{gather*}
\caption{$L$ typing rules.\quad  ($\dagger$) $i=1,2$.}
  \label{fig:core:typing}
\end{minipage}
\begin{minipage}{\textwidth}\small
\begin{gather*}
  \rulelabel{Val}  
  \irule{}{v\theta \yields v\theta}
  \sidecond{\parbox{1cm}{\centering $v\theta$ is a value}}
  \Quad{1.5}
  \rulelabel{Env} 
  \irule{}{%
        x\theta\yields v\theta'}  
        \sidecond{\theta(x) = v\theta'}
\\
  \rulelabel{$\lambda$-App} 
  \irule{e_0\theta\yields(\lam{x}e')\theta_0 \Quad{1}
         e_0\theta_1\yields v_1\theta_1\Quad{1}
         e'\theta_0[x\mapsto v_1\theta_1]\yields v\theta' 
         }{%
         (e_0\;e_1)\theta\yields v\theta'}
\\
  \rulelabel{Inj$_i$} 
  \irule{e\theta\yields v\theta'}{%
         (\inj_i e)\theta\yields (\uiota_i v)\theta'}
					\sidecond{\dagger}
\Quad{1}
  \rulelabel{Case} 
  \irule{e_0\theta\yields (\uiota_i v_i)\theta_i\Quad{1.25} 
         e_i\theta[x_i\mapsto v_i\theta_i]\yields v\theta'}{%
         (\Case e_0  \Of  
        (\inj_1 x_1) \Rightarrow e_1 ;\;
        (\inj_2 x_2) \Rightarrow e_2)\theta\yields
         v\theta'}
					\sidecond{\dagger}
\\
  \rulelabel{Unit} 
  \irule{}{()\theta \yields \underline{()}\theta}
  \Quad{1.5}
  \rulelabel{Pair} 
  \irule{e_1\theta\yields v_1\theta_1 \Quad{1.25}
         e_2'\theta_1\yields v_2\theta_2}{%
         (e_1,e_2)\theta \yields \upair{v_1}{v_2}\theta_2}
         \sidecond{\ddagger}         
  \Quad{1.5}
  \rulelabel{Proj$_i$} 
  \irule{e\theta\yields \upair{v_1}{v_2}\theta'}{%
         (\proj_i e)\theta \yields v_i\theta'}
					\sidecond{\dagger}
\end{gather*}
\caption{$L$ evaluation rules. 
\   ($\dagger$) $i=1,2$.
\  ($\ddagger$) $e_2'\equiv_\alpha e_2$, but $e_2'$ avoids
clashes with $\theta_1$.
}
\label{fig:core:eval}
\end{minipage}
\end{figure}
\begin{enumerate}
 \item 
    The ``S'' in $S^-$ and $S$ stands for \emph{structure} and 
    the ``R'' in $\RSmi$ and $\RSi$ stands for \emph{ramified}.
 \item 
    Internal 
    representations of constructors are {underlined}
    as in Figs.~\ref{fig:S-} and~\ref{fig:core:eval}.
 \item 
   \emph{The side-condition of \emph{Pair}-rule in 
   Fig.~\ref{fig:core:eval}.}
   If $e_1\theta\yields v_1\theta_1$ and 
   $e_2\theta\yields v_2\theta_2$, then $\theta_1$ and $\theta_2$ 
   may  be inconsistent. Hence in \emph{Pair}, $e_2$ is alpha-reduced
   to $e_2'$ so that the $e_1$- and $e_2'$-evaluations introduce 
   distinct
   variables into their value's environments.
  \item 
    The unsugared version of $\Succ (\Succ \Zero)$ is
	$c_\Nat (\iota_2 (c_\Nat (\iota_2 (c_\Nat (\iota_1 ())))))$.

 \item 
   \emph{Call-by-value and growth.}
   Note that for $e'$ of ground-type,    
   $\asize{((\lam{x}\Fork x\,x)\,e')}$ $=$ $1+\asize{e'}$
   because, by the call-by-value semantics, $e'$ is evaluated to a 
   value $v\theta$ (i.e., a reference to a data-representation) which 
   becomes the value of $x$ used in $\Fork x\,x$. This is explicit in 
   our closure-based evaluation semantics, since this expression 
   evaluates to $\Fork(x, x)[x\mapsto v\theta]$.
 \item  
   \emph{Dodging exponential growth.}
   \emph{If} one could define a function $f\of\Nat\to\Tree$ such that
   $f \Zero = \Leaf$ and $f (\Succ x) = \Fork\, (f\,x)\, (f\,x)$, then
   $\asize{f x}$ could be exponentially larger than $\asize{x}$. 
   Theorem~\ref{t:RS-:size} implies that no such $f$ is $\RSi$-definable, but intuitively the reason is  that 
   a $\fold_\Nat$ definition provides \emph{one} reference to the
   result of the recursive call since the $\Succ$-constructor is
   unary.  This one reference can be used multiple times, but always
   representing links to the same result, and hence not increasing the
   size.  The function that \emph{is} $\RSi$-definable is (in effect)
   $f'\Zero = \Leaf$ and $f'(\Succ x) = \Let r = f'(x) \In \Fork(r,r)$.   
   
 \item 
    \emph{Bellantoni and Cook's Raising Rule.}  It amounts to a (sound!) 
    specialization of  Whitehead and Russell's \emph{Axiom of 
    Reducibility}.  Compare the end of the first paragraph of 
    \cite[\S5]{BellantoniCook} and $\ast$12.1 of 
    \emph{Principia Mathematica}, Vol.~1, 1/e, Cambridge University Press, 
    1910,  available from 
    \texttt{http://name.umdl.umich.\allowbreak{}edu/AAT3201.0001.001}.

\item 
  Consider $\mathit{leaves}\of\Tree\to\Nat$ where 
  $\mathit{leaves}(t) =$ the number of leaves of $t$.
  $\RSi$ cannot compute this because $\asize{\mathit{leaves}(t)}$
  can   exponentially-larger than $\asize{t}$.
  In contrast, Pola can compute this as Pola
  allows some forms of change-of-parameter in recursions and,
  under Pola, $t$ is always a strict tree.
\item 
  \emph{Codata, Memoization, and Sharing.}
  Corecursions ($\unfold$s) are not memoized, but structure
  sharing is allowed in codata. 
  
\item

  \emph{Codata and poly-completeness.}
  Since type-level 1 $\RSi$ functions can have codata inputs 
  and outputs, we can translate some standard examples from
  type-2 complexity to show that $\RSi$ is missing some 
  functions over codata, where these functions' runtime
  complexity is comparable to that of $\RSi$-computable 
  functions. 
  The cure to this problem is to introduce an analogue of 
  Bellantoni's $\constr{Mod}$ function 
  \cite[Chapter 8]{bellantoni:thesis}  
  ($\constr{Mod} m\, n = m\bmod n$) or the authors' $\constr{Down}$
  function \cite[\S4]{DR:ATS:LMCS}
  ($\constr{Down} x\,y = x$, if $\asize{x}\leq\asize{y}$; 
  $\epsilon$, otherwise) both of which are 
  $(\hbox{safe}\to\hbox{normal}\to\hbox{normal})$ functions. 
  As to the motivations for such functions and their odd typing
  we refer the reader to \cite{DR:ATS:LMCS}.
  Adding such a function to $\RSi$ is \emph{not} a major change.

\end{enumerate}


The next lemma is a key property of terms with normal types.
Its proof is a simple induction on type derivations.

\begin{lemma}
\label{l:L1:sclosed} 
  If 
  $\Gamma \entails e\of\tau$ where $\tau$ is normal,
  then $\sfv(e)=\emptyset$.
\end{lemma}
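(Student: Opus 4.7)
The plan is a structural induction on the derivation of $\Gamma \entails e \of \tau$, with case analysis on the last typing rule applied. Since every normal type is ground, the function rules ($\to$-I, $\to$-E, $\lambda$-App) have non-ground conclusion types and are vacuous; $\Unit$-I is immediate; and for Id-I, if $x \of \tau$ with $\tau$ normal, then $\Gamma(x) = \tau$ is not safe, so $\sfv(x) = \emptyset$.

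For the data and product/sum introduction rules ($c_\tau$-I, $\times$-I, $+$-I), the leverage is the $\RSmi$ side-condition that the component types of $+$ and $\times$ are uniformly normal or uniformly safe: when the conclusion is normal, each immediate subterm has a normal-typed derivation, and the IH yields empty $\sfv$ at each. The corresponding destruction rules ($\times$-E$_i$, $d_\tau$-I) are handled by running this reasoning in reverse, using the invariant that every ground-type $\RSmi$-term is tagged uniformly as normal or safe to conclude that a normal-typed destruction must have had a normal-typed scrutinee.

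The ramified recursor and corecursor rules $\folds\tau$-I and $\unfolds\tau$-I always produce safe-typed terms, and the same is true of the safe-data constructors and destructors $\safe{c}_\tau$ and $\safe{d}_\tau$; those cases are therefore vacuous. The rule that does the real work is \emph{lower}: its side-condition $\sfv(e) = \emptyset$ is exactly the conclusion the lemma asks for, so the case closes immediately.

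The step I expect to be the main obstacle is $+$-E (the $\Case$ rule). The branches $e_1, e_2$ have the normal conclusion type $\tau$, so the IH gives $\sfv(e_i) = \emptyset$ under the extended contexts; in particular any pattern variable $x_i$ that happens to have safe type $\sigma_i$ cannot occur free in $e_i$, so it does not contribute to the $\sfv$ of the $\Case$. The scrutinee $e_0$ has type $\sigma_1 + \sigma_2$, which a priori could be safe, and here one must invoke the safety-coherence invariant of $\RSmi$/$\RSi$ derivations — that the scrutinee and result of a $\Case$ share their safety classification — so that $e_0$ is in fact of normal type and the IH applies to it, giving $\sfv(e_0) = \emptyset$ as well. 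Once that case is discharged the overall $\sfv$ of the $\Case$ term is empty, and the induction closes.
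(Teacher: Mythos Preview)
Your overall approach---structural induction on the typing derivation---is the same as the paper's, which says only that the proof is ``a simple induction on type derivations.'' However, there is a genuine gap in your case analysis. You claim that $\to$-\emph{E} is vacuous because its conclusion has non-ground type, but this is false: in $\to$-\emph{E} the conclusion is $(e_0\;e_1)\of\tau$, and $\tau$ can perfectly well be a normal ground type. (Aside: $\lambda$-\emph{App} is an evaluation rule from Fig.~\ref{fig:core:eval}, not a typing rule, and does not belong in this list.) When $\tau$ is normal the premises are $e_0\of\zeta\to\tau$ and $e_1\of\zeta$; the induction hypothesis applies to neither directly, since $\zeta\to\tau$ is never normal and $\zeta$ need not be. Concretely, with $y\of\safe\Nat$ in $\Gamma$, the term $(\lam{x}\Zero)\;y$ (with $x\of\safe\Nat$) is typable at $\Nat$ via $\to$-\emph{E}, yet $y\in\sfv$; so without some further restriction this case does not go through as you have written it.

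A parallel issue affects your $+$-\emph{E} case. You invoke a ``safety-coherence invariant'' forcing the scrutinee and result of a $\Case$ to share their safety classification, but the paper attaches no such side condition to $+$-\emph{E}; the $\RSmi$ modifications add side conditions only to $+$-\emph{I} and $\times$-\emph{I}. Without it, a safe scrutinee paired with normal-typed branches is type-correct and can carry safe free variables into a normal-typed $\Case$. The paper's one-line proof does not spell out how these cases are handled either; to make the induction close one needs either a strengthened hypothesis covering higher types, a restriction to normalized terms (as is explicitly assumed in the proof of Theorem~\ref{t:RS-:size}, where general applications are eliminated), or the additional side conditions you are tacitly assuming.
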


\begin{lemma}[Basic Poly-Heap Bounds Arithmetic] \label{l:ph}
Suppose $\Gamma\entails e\of\sigma$, 
$\asize{e}\leq p+\asize{\sfv(e)}$, 
$\Gamma\entails e'\of\sigma'$, 
and $\asize{e'}\leq p'+\asize{\sfv(e')}$, 
where $p$ and $p'$ are  polynomials over 
$\{\,\asize{x} \suchthat \Gamma(x)$ is normal$\,\}$.
Also suppose $x\in\fv(e)$ with $\Gamma(x)=\sigma'$.
Then:
\begin{asparaenum}[(a)]
 \item \label{i:norm:sub}
    $\asize{e[x\gets e']}\leq 
     (p[\asize{x}\gets p'])+\asize{\sfv(e[x\gets e'])}$,
     if $\sigma'$ is normal.
  \item \label{i:safe:sub}
    $\asize{e[x\gets e']}\leq 
     p+p'+\asize{\sfv(e[x\gets e'])}$,
    if $\sigma'$ is safe.
  \item  \label{i:pair}
    $\asize{(e,e')} \leq p+p'+\asize{\sfv(\,(e,e')\,)}$.
\end{asparaenum}
\end{lemma}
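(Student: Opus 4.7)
The plan is to unpack the poly-heap bound $\asize{e} \leq p + \asize{\sfv(e)}$ as a structural claim about the underlying dag representation: the value $v_e$ of $e$ in any environment consists of at most $p$ \emph{newly-allocated} cons-cells together with cells \emph{shared} with the representations of $e$'s safe free variables. Each of the three parts then follows by decomposing the resulting representation into these two classes and recombining so that the totals match the stated bound.

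For part (\ref{i:norm:sub}), since $\sigma'$ is normal, Lemma~\ref{l:L1:sclosed} applies to $e'$ and gives $\sfv(e')=\emptyset$, hence $\asize{e'}\leq p'$. Evaluating $e[x\gets e']$ in $\theta$ amounts to evaluating $e$ in $\theta[x\mapsto v_{e'}]$; in this extended environment $\asize{x} \leq p'$, and since $p$ is a polynomial in $\asize{x}$ and the sizes of the other normal variables, monotonicity of polynomials in nonnegative variables gives $p \leq p[\asize{x}\gets p']$. Since $x$ is normal, $\sfv(e[x\gets e']) = \sfv(e)$, so substituting into the hypothesis bound for $e$ yields the claim.

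For part (\ref{i:safe:sub}), $p$ is unchanged since $\asize{x}$ does not appear in it ($x$ being safe). The extended environment replaces $v_x$ by $v_{e'}$, so evaluating $\asize{\sfv(e)}$ there amounts to computing $\asize{\{v_{e'}\}\cup\{v_y : y\in\sfv(e)\setminus\{x\}\}}$. The key step is: since $v_{e'}$'s cells decompose as at most $p'$ new cells plus cells shared with $\sfv(e')$'s values, this combined set has apparent size at most $p' + \asize{\sfv(e[x\gets e'])}$, where $\sfv(e[x\gets e'])=(\sfv(e)\setminus\{x\})\cup\sfv(e')$. Combining with $\asize{e}\leq p + \asize{\sfv(e)}$ in the extended environment yields the target bound. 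For part (\ref{i:pair}), since the pair constructor contributes no data cons-cells (the paper's convention), $\asize{(e,e')}$ is the apparent size of $\{v_e,v_{e'}\}$ accounting for sharing; decomposing each of $v_e$, $v_{e'}$ into new plus shared cells and observing that the freshly-allocated cells of the two evaluations are mutually disjoint and disjoint from the safe free variables' cells gives a total of at most $p + p' + \asize{\sfv(e)\cup\sfv(e')} = p + p' + \asize{\sfv((e,e'))}$.

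The main obstacle is making the ``new vs.\ shared'' decomposition formal. A naive attempt to derive parts (\ref{i:safe:sub}) and (\ref{i:pair}) from the stated numerical inequalities alone fails: by subadditivity $\asize{\sfv(e)} + \asize{\sfv(e')} \geq \asize{\sfv(e)\cup\sfv(e')} = \asize{\sfv((e,e'))}$, so the trivial bound $\asize{(e,e')}\leq\asize{e}+\asize{e'}$ yields $p + \asize{\sfv(e)} + p' + \asize{\sfv(e')}$, which is in general strictly larger than the target $p + p' + \asize{\sfv((e,e'))}$. The resolution is to treat $\asize{e}\leq p + \asize{\sfv(e)}$ as a \emph{structural} invariant, proved inductively in tandem with Theorem~\ref{t:RS-:size} by tracking precisely which cons-cells are newly allocated during evaluation, rather than as a mere numerical inequality.
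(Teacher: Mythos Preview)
Your proposal is correct and follows essentially the same approach as the paper: part~(\ref{i:norm:sub}) via $\sfv(e')=\emptyset$ (Lemma~\ref{l:L1:sclosed}) and polynomial monotonicity, part~(\ref{i:safe:sub}) via the substitution chain $\asize{e[x\gets e']}\leq p+\asize{e',\sfv(e)\setminus\{x\}}\leq p+p'+\asize{\sfv(e'),\sfv(e)\setminus\{x\}}$, and part~(\ref{i:pair}) via the observation that the na\"{\i}ve bound double-counts shared structure. Your explicit point that the bound must be read \emph{structurally} (cells of $v_e$ decompose into at most $p$ new cells plus cells already present in the safe free variables' values) rather than as a bare numerical inequality is exactly what the paper's sketch is silently relying on---in particular its middle inequality for part~(\ref{i:safe:sub}), attributed only to ``monotonicity,'' does not follow from $\asize{e'}\leq p'+\asize{\sfv(e')}$ as a numerical fact alone, just as you observe.
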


\begin{proof}[Sketch]
\emph{Part~\eqref{i:norm:sub}:} By $\sfv(e')=\emptyset$. 
Hence, by the monotonicity of our polynomials, (a) follows.

\emph{Part~\eqref{i:safe:sub}:}
By monotonicity again 
(and some  abuse of notation):
$\asize{e[x\gets e']} \leq
p + \asize{e',(\sfv(e)-\set{x})}
\leq p+ (p'+\asize{\sfv(e'),(\sfv(e)-\set{x})})
\leq p+p'+\asize{e[x\gets e']}$.

\emph{Part~\eqref{i:pair}:}
A na\"{\i}ve upper bound on $\asize{(e,e')}$ is 
$p+p'+2\asize{\sfv(\,(e,e')\,)}$, but this double counts the
structure shared by $e$ and $e'$. 
So by eliminating the double counting, 
we have the required bound. 
\qed
\end{proof}

\emph{Poly-Heap vs.~Poly-Max Bounds.}
The analogue of parts \eqref{i:norm:sub} and \eqref{i:safe:sub} 
of Lemma \ref{l:ph} hold for poly-max bounds. 
Bounds of the form of part \eqref{i:safe:sub} are key in 
poly-boundedness arguments for forms of ``safe'' recursions. 
The analogue of Lemma \ref{l:ph}\eqref{i:pair} \emph{fails}
for poly-max bounds.  However, if one requires (\emph{\`a la} Pola)
that $e$ and $e'$ have no safe variables in common, then 
the poly-max-analogue of Lemma \ref{l:ph}\eqref{i:pair} 
does hold.
These two alternative ways of counting are at the heart of 
the $\RSi$/Pola  split.  Note that what is a stake in how one
bounds a pair is how, in general, one bounds the size of 
branching structures.



\begin{theorem}[Theorem~\ref{t:RS-:size} Restated] 
  Given an $\RSmi$ judgment $\Gamma\entails e\of \tau$  in which
  $\tau$ and each $\sigma\in\image(\Gamma)$ is a ground type, one can
  effectively find a normal polynomial $p$ with
  $\asize{e} \leq p + \asize{\sfv(e)}$.
\end{theorem}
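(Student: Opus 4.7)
The plan is to prove the theorem by structural induction on the typing derivation $\Gamma \entails e \of \tau$, building for each sub-judgment a normal polynomial $p$ that witnesses the bound. Because $\folds\tau$-\emph{I} needs a premise of arrow type $F\sigma\to\sigma$, the induction must carry through higher-type subterms as well; for a $\lambda$-abstraction I read the inductive bound as a schema in the abstracted variables, collapsing it to an ordinary normal polynomial when the eventual conclusion is ground. The base cases (variables and $()$) are immediate: a normal variable contributes $\asize{x}$ itself, a safe variable is absorbed into $\asize{\sfv(e)}$, and $()$ contributes no data cons-cells.

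For the $L$-fragment the work is essentially bookkeeping on top of Lemma~\ref{l:ph}. The pair and injection rules add $1$ for the new cons-cell and combine premise polynomials via Lemma~\ref{l:ph}\eqref{i:pair}, which charges shared safe substructure only once; the case rule combines branches by taking a sum (an upper bound on the max), then uses Lemma~\ref{l:ph}\eqref{i:norm:sub} or \eqref{i:safe:sub} to substitute the scrutinee's bound for the pattern variable. Ground-type applications $(e_1\,e_2)$ reduce similarly: the inductive schema for $e_1$'s $\lambda$-body is composed with the bound for $e_2$ using the appropriate substitution part of Lemma~\ref{l:ph}. The destructor rule uses $\asize{d_\tau e}\leq \asize{e}$ since the destructor returns a pointer into the existing representation. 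The constructor $c_\tau$-\emph{I} is like pair/injection, and when $\tau$ is normal Lemma~\ref{l:L1:sclosed} collapses $\asize{\sfv(c_\tau e)}$ to $0$. The \emph{lower} rule comes with the side condition $\sfv(e)=\emptyset$, so the existing bound $p+\asize{\sfv(e)}$ is already a pure normal polynomial.

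The main case, and the main obstacle, is $\folds\tau$-\emph{I}, where $\Gamma\entails\folds\tau f\,e\of\sigma$ with $\tau$ normal and $\sigma$ safe. By the inductive hypothesis on $e$ and Lemma~\ref{l:L1:sclosed} there is a normal polynomial $q$ with $\asize{e}\leq q$. The inductive schema for $f$ yields a normal polynomial $p_f$ with $\asize{f\,z}\leq p_f + \asize{z,\sfv(f)}$ for a fresh safe $z\of F\sigma$. Memoization guarantees at most $\asize{e}\leq q$ invocations of $f$, one per cons-cell of $e$'s dag; the argument threaded into each call is a coproduct/product holding pointers to previously computed results together with the safe free variables of $f$. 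Iterating Lemma~\ref{l:ph}\eqml{i:pair} up the dag of accumulated results so that shared safe substructure is never double-counted gives an apparent size bounded by $q\cdot p_f + \asize{\sfv(f)}$, a normal polynomial plus $\asize{\sfv(\folds\tau f\,e)}$.

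The hard part is convincing oneself that memoization together with structure-sharing really does defuse the branching blow-up that an unrestricted fold would produce. I would therefore strengthen the inductive invariant for the fold case: after the recursion has processed $k$ cells of $e$'s dag the \emph{entire} memo table has apparent size at most $k\cdot p_f + \asize{\sfv(f)}$, justified step-by-step by Lemma~\ref{l:ph}\eqref{i:pair}'s no-double-counting clause. With that invariant the fold case reduces to a one-line summation and the theorem follows with $p := q\cdot p_f$ composed with any outer substitutions contributed by enclosing rules.
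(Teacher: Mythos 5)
Your proposal is correct and, on the decisive point, takes the same route as the paper: structural induction on the typing derivation, Lemma~\ref{l:ph} for the routine cases, and for $\folds{}$ the observation that memoization yields one invocation of the step function per cons-cell of the input dag, with the accumulated memo table growing by at most $p_f$ per cell because shared safe substructure is never recounted. Your strengthened invariant (memo table of apparent size at most $k\cdot p_f + \asize{\sfv(f)}$ after $k$ cells) is exactly the paper's Claim~2(c) in telescoped form, and you land on the same final polynomial $q\cdot p_f$. Two small divergences: the paper disposes of higher-type subterms by normalizing $e$ up front (so $\lambda$-abstractions survive only as $\folds{}$-arguments and all variables are ground), whereas you thread polynomial ``schemas'' for abstractions through the induction and compose them at applications --- equivalent in effect, but the normalization move is the cleaner way to make the induction hypothesis well-typed; and the per-cell accounting for the memo table really rests on the set-extension of the safe-substitution bound, Lemma~\ref{l:ph}(b) (the paper's Claim~1), since each new result is the step-function body with pointers to earlier results substituted in, rather than on the pair clause (c) that you cite --- the underlying no-double-counting principle is the same, but the substitution form is the one that applies.
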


\begin{proof}[Partial sketch]
Our first problem in exhibiting the upper bound is that $e$ may well 
contain higher-type subterms.  
Let $\tilde{e}$ be the normalized version of $e$.  Note
that $|e|\leq |\tilde{e}|$, where $|\tilde{e}|$ can be much larger
than $|e|$.  But a poly-heap bound on $|\tilde{e}|$ serves as a
bound on $|e|$.  Thus, we assume without loss of generality that
$e$ is normalized.
Since $e$ is normalized, the only place a $\lambda$-expression can 
occur in $e$ is
as the first argument of a $\folds{}$-construct, moreover, these
$\lambda$-expressions have level-1 types.
Also note that each variable occurring in $e$ must be of ground type.

The proof is a structural induction on the derivation of 
$\Gamma\entails e\of\tau$.  We consider the last rule
used in this derivation.  

All of the cases, save one, are standard, straightforward arguments---adjusting for the change from poly-max to poly-heap 
bounds.  So we omit these. The interesting case is the one for 
$\folds{\sigma}$.  We treat this case which, for simplicity 
and concreteness, we further narrow to the case for
$\folds{\Tree}$, which 
touchs on the key issues in the general $\folds{\sigma}$-case.
Recall that
$F_\Tree X = \Unit + X \times X$, \
$F_\Tree f = \id_{\Unit} + f\times f
= \lam{u}\big(\Case u \Of\,(\iota_1\, v) \Rightarrow \iota_1()$; 
$(\iota_2\, v) \Rightarrow \iota_2(f(\pi_1(v)),f(\pi_2(v)))\big)$,
and $(\folds{\Tree} g)\circ c_\Tree = g \circ F_\Tree(\folds\Tree g)$.

\emph{Some conventions:}  To cut down on clutter, 
when $y$ is of ground type $\sigma$ and $v$ is a type-$\sigma$
value  (i.e., a pointer to an internal representation of 
a type-$\sigma$ object), we shall rewrite 
$e\theta[y\mapsto v\theta']$ to $e[y\gets v]\theta$, 
\emph{provided} the value named by $v\theta'$ is a function
of $\theta$.  The substitution of the (pointer) $v$ for
the variable $y$ in $e$ is, in essence, just cutting out
one level of indirection and thus simplifies reasoning about
the value of $e\theta[y\mapsto v\theta']$.
Similarly, in ``heap'' expressions $\asize{e_1,\dots,e_k}$
we allow value terms (i.e., pointers to representations) 
among the $e_i$'s with the obvious meaning of 
$\asize{e_1,\dots,e_k}$, again we are simply cutting out a
level of indirection.  Finally, if $E$ a set of $k$-many
expressions $e_1,\dots,e_k$, then 
$\asize{E} = \asize{e_1,\dots,e_k}$.

\textsc{Case:} $\folds{\Tree}$-\emph{I}.  
Thus, $e = (\folds{\Tree}\, (\lam{z}e_0) \, e_1)$, where 
$\Gamma\entails e\of\tau$, 
$\Gamma,z\of{F_{\Tree}\tau}\entails e_0\of{\tau}$, 
$\Gamma\entails e_1\of{\Tree}$, and $\tau$ is a safe base type.
By the induction hypothesis, there are normal polynomials $p_0$
and $p_1$ that $\asize{e_0}\leq p_0 + \asize{\sfv(e_0)}$, 
and $\asize{e_1}\leq p_1$.  Fix an environment $\theta$
and suppose $e_1\theta\yields t_1\theta'$.  Recall that
$t_1$ is a pointer to the dag-representation of $e_1$'s value.
(Since $t_1$ is a data-constant, it suffices to take $\theta'=\theta$.) 
Let $t_2,\dots,t_n$ be pointers to the other $\Tree$-cons-cells
in the representation, ordered so that, for all $i$ and $j$,
if $t_i$ is an dag-ancestor of $t_j$, then $i\leq j$. 
Suppose, for $i=1,\dots,n$, $(\folds{\Tree} (\lam{x}e_0)\,t_i)\theta\yields r_i\theta_i$, where $r_i$ is a pointer
to the dag-representation of the result of the $\folds\Tree$-recursion. \textbf{N.B.}  The $t_i$'s and $r_i$'s are functions
of $\theta$.  So, as a reminder of this, 
 in our bounds calculations, we shall make explicit
the usually suppressed $\theta$.

\smallskip\noindent\emph{Claim 1:} Suppose the 
hypotheses of Lemma~\ref{l:ph} and suppose $\sigma'$ is safe.
Then 
$\asize{\set{e[x\gets e'],e_1,\dots,e_k}}\theta\leq 
  (p+\asize{\set{e',e_1,\dots,e_k}\cup\sfv(e[x\gets e'])})\theta$,
  for all $\theta$.

\smallskip\noindent\emph{Proof:}
This is just an extension of the proof of Lemma~\ref{l:ph}(b).

\smallskip\noindent\emph{Claim 2:} For each $i=1,\dots,n$:
\begin{asparaenum}[(a)]
  \item If $t_i=\underline{\Leaf} = \uc_\Tree(\uiota_1\uep)$, 
  	then $(\folds{\Tree} (\lam{x}e_0)\,t_i)\theta 
	      = e_0\theta[z\mapsto \uiota_1\uep]$.
  \item If $t_i=\underline{\Fork}\,t_j\,t_j 
            = \uc_\Tree(\uiota_2\upair{t_j}{t_k})$, 
  	then $(\folds{\Tree} (\lam{x}e_0)\,t_i)\theta 
	      = e_0\theta[z\mapsto \uiota_2\upair{r_j}{r_k}]$
	      where $j,k>i$.
  \item $\asize{\set{r_i,\dots,r_n}\cup\sfv(e)}\theta
        \leq p_0 +\asize{\set{r_{i+1},\dots,r_n}\cup\sfv(e)}\theta$.
\end{asparaenum}

\smallskip\noindent\emph{Proof:}
Part (a) is a  straightforward calculation.  

Part (b) is another straightforward calculation, taking into 
account that the implementation of $\folds\Tree$ is memoizing.

Part (c). \emph{Case:} $t_i$ is a leaf. 
Then
\begin{align*}
\lefteqn{\asize{\set{r_i,\dots,r_n}\cup\sfv(e)}\theta }\\
& \Quad1= 
\asize{\set{e_0[z\gets \uiota_1\uep] ,r_{i+1},\dots,r_n}\cup\sfv(e)}\theta
&& \hbox{(by part (a))}
\\
& \Quad1=  
p_0 + \asize{\set{\uiota_1\uep,r_{i+1},\dots,r_n}\cup\sfv(e)}\theta
&& \hbox{(by Claim 1)}
\\
& \Quad1= 
 p_0 + \asize{\set{r_{i+1},\dots,r_n}\cup\sfv(e)}\theta
 &&\hbox{(since $\asize{ \uiota_1\uep}=0$).}
\end{align*}
\emph{Case:} $t_i$ is a fork. 
Then
\begin{align*}
\lefteqn{\asize{\set{r_i,\dots,r_n}\cup\sfv(e)}\theta }
\\
& \Quad1=  
\asize{\set{e_0[z\gets \uiota_2\upair{r_j}{r_k}] ,r_{i+1},\dots,r_n}\cup\sfv(e)}\theta
&& \hbox{(by part (b))}
\\
& \Quad1= 
 p_0 + \asize{\set{\uiota_2\upair{r_j}{r_k},r_{i+1},\dots,r_n}\cup\sfv(e)}\theta
&& \hbox{(by Claim 1)}
\\
& \Quad1=  
p_0 + \asize{\set{r_{i+1},\dots,r_n}\cup\sfv(e)}\theta
&& \hbox{(since $j,k>i$)}.
\end{align*}

Thus by Claim 2(c), $\asize{(\folds{\Tree} (\lam{x}e_0)\,e_1)\theta }
= \asize{(\folds{\Tree} (\lam{x}e_0)\,t_1)\theta }
= \asize{\set{r_1}\cup\sfv(e)}\theta
\leq \asize{\set{r_1,\dots,r_n}\cup\sfv(e)}\theta
\leq (p_0\cdot n + \asize{\sfv(e)})\theta$.
Recall that $\asize{e_1}\leq p_1$.
Therefore, $p=p_0\cdot p_1$ suffices for this case.

The effectiveness part of the theorem follows from the fact
that the induction argument essentially describes a recursive 
algorithm for constructing $p$.  \qed
\end{proof}

\begin{theorem}[Theorem~\ref{t:RS:size} Restated]
  For an $\RSi$-judgment $\Gamma\entails e\of \tau$ where
  $\tau$ and each $\sigma\in\image(\Gamma)$ is a ground type, one
  can effectively find a normal second-order polynomial $p$ such
  that, if $\tau$ is a data-type, then $\asize{e} \leq p +
  \asize{\sfv(e)}$ and, if $\tau$ is a codata-type, then
  $\osize{e} \leq \lam{n}(p + \asize{\sfv(e)})$.
\end{theorem}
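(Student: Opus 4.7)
The plan is to extend the proof of Theorem~\ref{t:RS-:size} with new cases for $\RSi$'s codata constructs $\co{c}_\tau$-I, $\co{d}_\tau$-I, and $\unfolds{\tau}$-I, and to upgrade the class of bounding polynomials from normal (first-order) polynomials to normal second-order polynomials so that $\osize{x}$ may appear whenever $x$ is a codata-typed normal variable. As before, I first normalize $e$ so that the remaining $\lambda$-abstractions occur only as the first arguments of $\folds{}$ or $\unfolds{}$; a bound on the normalized form then bounds $e$. I then proceed by structural induction on the typing derivation of $\Gamma\entails e\of\tau$, with two parallel sub-statements: if $\tau$ is a data type, bound $\asize{e}$ by a normal second-order polynomial plus $\asize{\sfv(e)}$; if $\tau$ is a codata type, bound $\osize{e}$ at $n$ by a similar expression in which $n$ may appear as a type-0 variable.

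The data-typed cases (including $\folds{\tau}$) carry over essentially verbatim from the proof of Theorem~\ref{t:RS-:size} once one checks that the arithmetic of Lemma~\ref{l:ph} and its associated heap-counting refinement still works when the bounding polynomials are allowed to be second-order; this follows from monotonicity and compositionality of second-order polynomials. The codata constructor and destructor cases are straightforward: a value $(\co{c}_\tau\,e)\theta$ is lazy and has apparent size $1$, and by the Destr$'_\tau$-rule its observed size at $n+1$ is bounded by $1$ plus the induction-hypothesis bound applied at $n$ to the ground-typed subvalue $e\of F_\tau\tau$; the $\co{d}_\tau$-case is similar and inherits its bound directly from the induction hypothesis on the destructed term.

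The key case is $\unfolds{\tau}$-I. Suppose $e = (\unfolds{\tau}\,g\,s)\of\safe\tau$ with $g\of\safe\sigma\to\safe{(F_\tau\sigma)}$ and $s\of\safe\sigma$, and that the induction hypothesis supplies normal second-order polynomials $p_g,p_s$ with $\asize{g\,z}\leq p_g+\asize{z,\sfv(g)}$ and $\asize{s}\leq p_s+\asize{\sfv(s)}$. To bound $\osize{e}(n)$, consider any destructor sequence $\vec{d}$ using at most $n$ occurrences of $\co{d}_\tau$: by the Unfold$_\tau$-rule, evaluating $\vec{d}(e)$ applies $g$ up to $n$ times to a sequence of safe seeds $s_0=s,s_1,\ldots,s_k$ (with $k\leq n$), where each $s_{i+1}$ is a $\sigma$-typed subvalue of $g\,s_i$. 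Lemma~\ref{l:ph}(b) gives $\asize{s_{i+1}}\leq\asize{g\,s_i}\leq p_g+\asize{s_i,\sfv(g)}$, so telescoping bounds $\asize{s_i}$ by $i\cdot p_g + p_s + \asize{\sfv(e)}$; since $\asize{\vec{d}(e)}\leq\asize{g\,s_k}$, we obtain a bound on $\osize{e}(n)$ that is a normal second-order polynomial linear in $n$.

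The main obstacle is exactly certifying that the seed sizes $\asize{s_i}$ grow only additively, and not multiplicatively, under iteration of $g$; this hinges crucially on the safe ramification discipline, which forces $g$'s poly-heap bound to be additive in $\asize{s_i}$ rather than multiplicative in it. Without this, iterated $\unfold$s could produce exponential blow-up in the seeds, entirely analogous to the $\Fork(f(x),f(x))$ doubling that ramification forbids on the data side (cf.~the ``Dodging exponential growth'' remark). The effectiveness clause follows, as in Theorem~\ref{t:RS-:size}, from the fact that the induction proof implicitly describes a recursive algorithm for constructing $p$.
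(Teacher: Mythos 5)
Your proposal is correct and follows essentially the same route as the paper's own argument: normalize $e$, induct on the typing derivation, and in the key $\unfolds{\tau}$ case iterate the step function's additive poly-heap bound over the seeds to obtain an observed-size bound linear in $n$ (the paper works this out concretely for $\unfolds{\Nats}$ with a data-typed seed, which is exactly your telescoping $\asize{s_i}\leq i\cdot p_g+p_s+\asize{\sfv(e)}$). Your closing remark that safe ramification is what makes the per-step bound additive rather than multiplicative is precisely the point the paper is relying on.
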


\begin{proof}[Partial sketch]
As in the proof of Theorem~\ref{t:RS-:size}, we may without loss of
generality assume that $e$ is normalized.  Thus 
the only place a $\lambda$-expression can occur in $e$ is 
as the first argument of a $\folds{}$- or an $\unfolds{}$-construct, 
and moreover, these $\lambda$-expressions have level-1 types.
Also note that each variable occurring in $e$ must be of ground type. 
Our proof is a structural induction on the derivation of 
$\Gamma\entails e\of\tau$.  
We consider the last rule used in this derivation.

Now, as in our sketch of the proof of Theorem~\ref{t:RS-:size}, 
here we shall present just one key case ($\unfolds\tau$-\emph{I}), 
and in fact, a specialization of that ($\unfolds\Nats$-\emph{I}).
Unlike the situation for the proof of Theorem~\ref{t:RS-:size},
the omitted cases here are less standard and a few 
involve some fine points.  However, almost all of these omitted 
cases parallel problems we dealt with our work on feasible
type-level 2 programming formalisms 
\cite{DR:ATS:LMCS,DannerRoyer:2algs}.

\textsc{Case:} $\unfolds{\Nats}$-\emph{I}.
We consider the case where $\sigma$ is a \emph{data} type.
Thus, $e = (\unfolds{\Nats}\, (\lam{z}e_0) \, e_1)$, where 
$\sigma$ is a safe ground data-type,
$\Gamma,z\of\safe\sigma\entails e_0\of\safe\Nat\times \safe\sigma$, and $\Gamma\entails e_1\of{\safe\sigma}$.  
Recall 
$F_\Nats X = \Nat \times X$,
$F_\Nats f = \id_{\Nat}\times f
= \lam{u}(\pi_1 u,f(\pi_2 u))$, and
$\co{d}_\Nats\circ (\unfolds\Nats\,g) = 
F_\Nats (\unfolds{\Nats} g) \circ g =
\lam{u}(\pi_1(g(u)),\allowbreak \unfolds\Nats\, g\,(\pi_2(g(u))))$.
Let $g_1=\pi_1\circ g$ and $g_2=\pi_2\circ g$, then
for all $n\geq 1$:
\begin{gather}\label{e:nth}
  \co{d}_\Nats^{(n)}(\unfolds\Nats\,g\;u) \;=\; 
  \left(g_1(g_2^{(n-1)}u),\;
  \unfolds\Nats\, g\;(g_2^{(n)}u)\right).
\end{gather}
Now, by the induction hypothesis, there are normal polynomials
$p_0$ and $p_1$ such that
$\asize{e_0}\leq p_0+\asize{z,\sfv(e)}$ and
$\asize{e_1}\leq p_1+\asize{\sfv(e)}$.  
By \eqref{e:nth}, to bound $\osize{e}(n)$ for $n\geq 1$, 
it suffices to bound $\asize{g_1(g_2^{n-1} e_2)}$ for 
$g_1 = \pi_1\circ (\lam{z} e_0)$
and $g_2 = \pi_2 \circ (\lam{z} e_0)$.
For $n=1$, 
$\asize{g_1(g_2^{n-1} e_1)} = \asize{e_0[z\gets e_2]}
\leq p_0 + \asize{e_1,\sfv(e)} \leq p_0+p_1+\asize{e}$.
Iterating this, we have for $n\geq1$, 
$\osize{e}(n) \leq \asize{g_1(g_2^{n-1} e_1)} \leq p_0 
+ p_1\cdot n + \asize{\sfv(e)}$.
Hence, $p=p_0+p_1\cdot n$ suffices.

In the case where $\sigma$ is a \emph{codata} type, the basic 
structure of the argument stays the same but the (second-order
polynomial) algebra becomes more involved.

%

The induction above essentially describes a recursive algorithm 
for constructing $p$.  Hence, the effectiveness part of the 
theorem follows. 
\qed
\end{proof}

\end{document}